\tikzset{%
  zeroarrow/.style = {-stealth,dashed},
  onearrow/.style = {-stealth,solid},
  c/.style = {circle,draw,solid,minimum width=2em,
        minimum height=2em},
  r/.style = {rectangle,draw,solid,minimum width=2em,
        minimum height=2em}
}
\definecolor{DarkGreen}{rgb}{0.1,0.5,0.1} 
\definecolor{DarkRed}{rgb}{0.5,0.1,0.1}
\definecolor{DarkBlue}{rgb}{0.1,0.1,0.5}
\newtheorem{theorem}{Theorem}
\newtheorem{lemma}{Lemma}[section]
\newtheorem{proposition}[lemma]{Proposition}
\newtheorem{corollary}[theorem]{Corollary}
\newtheorem{claim}[lemma]{Claim}
\newtheorem{definition}{Definition}
\newtheorem{problem}{Problem}
\newtheorem{remark}[lemma]{Remark}
\newtheorem{mdresult}{Result}
\newtheorem{mdinvariant}{Invariant}
\newtheorem*{theorem*}{Theorem}
\newtheorem*{claim*}{Claim}
\newtheorem*{proposition*}{Proposition}
\newtheorem*{lemma*}{Lemma}
\newtheorem*{problem*}{Problem}
\DeclareMathOperator*{\argmax}{arg\,max}
\renewcommand{\qed}{\nobreak \ifvmode \relax \else
      \ifdim\lastskip<1.5em \hskip-\lastskip
      \hskip1.5em plus0em minus0.5em \fi \nobreak
      \vrule height0.75em width0.5em depth0.25em\fi}
\newcommand{\toShrink}{-.20cm}
\newcommand{\toShrinkEnu}{-.2cm}
\newcommand{\Leq}[1]{\ensuremath{\underset{\textnormal{#1}}\leq}}
\newenvironment{tbox}{\begin{tcolorbox}[
		enlarge top by=5pt,
		enlarge bottom by=5pt,
		 boxsep=0pt,
                  left=4pt,
                  right=4pt,
                  top=10pt,
                  arc=0pt,
                  boxrule=1pt,toprule=1pt,
                  colback=white
                  ]
	}
{\end{tcolorbox}}
\newcommand{\eps}{\ensuremath{\varepsilon}}
\newcommand{\Paren}[1]{\Big(#1\Big)}
\newcommand{\bracket}[1]{\left[#1\right]}
\newcommand{\paren}[1]{\ensuremath{\left(#1\right)}\xspace}
\newcommand{\card}[1]{\left\vert{#1}\right\vert}
\newcommand{\ceil}[1]{{\left\lceil{#1}\right\rceil}}
\newcommand{\alg}{\ensuremath{\mathcal{A}}\xspace}
\DeclareMathOperator*{\Exp}{\ensuremath{{\mathbb{E}}}}
\DeclareMathOperator*{\Prob}{\ensuremath{\textnormal{Pr}}}
\renewcommand{\Pr}{\Prob}
\newcommand{\Ex}{\Exp}
\newcommand{\etal}{et al.\xspace}
\renewcommand{\alg}{\ensuremath{\textnormal{\textsf{ALG}}}\xspace}
\newcommand{\rank}[1]{\ensuremath{\textnormal{\textsf{rk}}(#1)}\xspace}
\newcommand{\erank}[1]{\ensuremath{\widetilde{\textnormal{\textsf{rk}}}(#1)}\xspace}
\newcommand{\A}{\mathcal{A}}
\renewcommand{\O}{\mathcal{O}}
\title{Secretary Ranking with Minimal Inversions\footnote{Work done in part while the first two authors were summer interns at Google Research, New York.}}
\author{Sepehr Assadi\thanks{Department of Computer and Information Science, University of Pennsylvania. Email: \texttt{sassadi@cis.upenn.edu}.} \and 
Eric Balkanski\thanks{School of Engineering and Applied Sciences, Harvard University. Email: \texttt{ericbalkanski@g.harvard.edu}.}  \and  
Renato Paes Leme\thanks{Google Research, New York. Email: \texttt{renatoppl@google.com}.}
}
\date{}
\begin{document}

\maketitle
\thispagestyle{empty}
\begin{abstract}
We study a twist on the classic secretary problem, which we term the \emph{secretary ranking} problem: elements from an ordered set
arrive in random order and instead of picking the maximum element, the algorithm is asked to
assign a rank, or position, to each of the elements. The rank assigned is irrevocable and is
given knowing only the pairwise comparisons with elements previously arrived.
The goal is to minimize the distance of the rank produced to the true rank of
the elements measured by the Kendall-Tau distance, which corresponds to the
number of pairs that are inverted with respect to the true order.

\medskip

Our main result is a matching upper and lower bound for the secretary ranking
problem. We present an algorithm that ranks $n$ elements with only
$O(n^{3/2})$ inversions in expectation, and show that any algorithm
necessarily suffers $\Omega(n^{3/2})$ inversions when there are $n$ available positions.  In terms of techniques, the
analysis of our algorithm draws connections to linear probing in the hashing
literature, while our lower bound result relies on a general
anti-concentration bound for a generic balls and bins sampling process.
We also consider the case where the number of positions $m$
can be larger than the number of secretaries $n$ and provide an improved
bound by showing a  connection of this problem with random binary trees.
\end{abstract}
\setcounter{page}{0}

\newpage



\newcommand{\II}{\ensuremath{\mathbb{I}}}

\section{Introduction}
\label{sec:intro}

The secretary problem is one of the first problems studied in online algorithms---in fact, is was extensively studied much before the field of online
algorithms even existed. It first appeared in print in 1960 as a recreational
problem in Martin Gardner's Mathematical Games column in  Scientific American.
In the subsequent decade it caught the attention of many of the eminent
probabilist researchers like Lindley~\cite{lindley1961dynamic}, Dynkin~\cite{dynkin1963optimum}, Chow \etal~\cite{chow1964optimal} and Gilbert and Mosteller \cite{gilbert2006recognizing}
among others. In a very entertaining historical survey,
Ferguson \cite{ferguson1989solved} traces the origin of the secretary problem to
much earlier: Cayley in 1875 and Kepler in 1613 pose questions in the same spirit
as the secretary problem.

Secretary problem has been extended in numerous directions, see for example the
surveys by Sakaguchi \cite{sakaguchi1995optimal} and Freeman \cite{freeman1983secretary}.
The problem has had an enormous influence in computer science and has provided
some of basic techniques in the field of online and approximation algorithms.
Babaioff et al extended this problem to matroid set systems
\cite{babaioff2007matroids} and Knapsack \cite{babaioff2007knapsack} and perhaps more
importantly, show that the secretary problem is a natural tool for designing
online auctions. In the last decade, the secretary problem has also been
extended to posets \cite{kumar2011hiring}, submodular systems
\cite{bateni2010submodular},
general set systems
\cite{rubinstein2016beyond}, stable matchings \cite{babichenko2017stable},
non-uniform arrivals \cite{kesselheim2015secretary}
and applied to optimal data sampling \cite{girdhar2009optimal},
design of prophet inequalities \cite{azar2014prophet,esfandiari2017prophet},
crowdsourcing systems \cite{singer2013pricing}, pricing in online
settings \cite{cohen2014pricing}, online linear programming
\cite{agrawal2014dynamic} and online ad
allocation \cite{feldman2010online}.

The (admittedly incomplete) list of extensions and applications in the last
paragraph serves to showcase that the secretary problem has traditionally 
been a vehicle for deriving connections between different subfields of computer
science and a testbed of new techniques.

\paragraph{Ranking Secretaries.}
Here we consider a natural variant of the secretary problem, which we name the \emph{secretary ranking} problem, where instead of
selecting the maximum element we are asked to \emph{rank} each arriving element. In the
process of deriving the optimal algorithm for this problem, we show connections
to the technique of linear probing, which is one of the earliest techniques in
the hashing literature studied by Knuth~\cite{knuth1963notes} and also to
the expected height of random binary trees.

In the traditional secretary problem a decision maker is trying to hire a
secretary. There is a total order over $n$ secretaries and the goal of the algorithm
is to hire the best secretary. The secretaries are assumed to arrive in a random order and the
algorithm can only observe the relative rank of each secretary with respect to
the previously interviewed ones. Once a secretary is interviewed, the algorithms
needs to decide whether to hire the current one or to irrevocably abandon the
current candidate and continue interviewing.

In our setting, there are $m$ job positions and $n$ secretaries 
with $m \geq n$. There is a known total order on positions.
Secretaries arrive in random
order and, as before, we can only compare a secretary with previously
interviewed ones. In our version, all secretaries will be hired and the decision
of the algorithm is in which position to hire each secretary. Each position can
be occupied by at most one secretary and hiring decisions are irrevocable.
Ideally, the algorithm will hire the best secretary in the best position,
the second best secretary in the second best position and so on. The loss
incurred by the algorithm corresponds to the pairs that are incorrectly
ordered, i.e., pairs where a better secretary is hired in a worse position.

\subsection{Our Results and Techniques}

\paragraph{Dense case ($m=n$).} We start by studying the perhaps most natural case of the secretary ranking problem when $m=n$, which we call the dense case.
The trivial algorithm that assigns a random empty position for each arriving
secretary incurs $\Theta(n^2)$ cost, since each pair of elements has probability
$1/2$ of being an inversion.
On the other hand, $\Omega(n)$ is  a trivial lower bound on the cost of any
algorithm  because nothing is known when the first element arrives. As such,
there is a linear gap between the costs of the trivial upper and lower bounds
for this secretary ranking problem. 
Our first main result is an asymptotically tight upper and lower bound on the loss incurred by the algorithms for the secretary ranking problem.

\begin{theorem*}\label{res:main}
  There is an algorithm for the secretary ranking
  problem with $m=n$ that computes a ranking with $\mathcal{O}(n^{3/2})$ inversions in
  expectation. Moreover, any algorithm for this problem makes $\Omega(n^{3/2})$
  inversions in expectation.
\end{theorem*}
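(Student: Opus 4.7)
The plan is to prove the upper and lower bounds separately, using rather distinct techniques. For the upper bound I would design an algorithm around the unbiased estimator $\hat{\rho}_t := \lceil r_t \cdot n / t \rceil$ for the true rank of the $t$-th arrival, where $r_t$ is its relative rank among the first $t$ elements; standard hypergeometric concentration gives $\mathbb{E}|\hat{\rho}_t - \rho(v_t)| = O(\sqrt n)$. The algorithm then attempts to place the arrival at position $\hat{\rho}_t$, and if that slot is occupied, probes outward for a free one, in the spirit of linear probing in hashing. The total footrule distance $\sum_v |\pi(v)-\rho(v)|$ splits into (i) estimator error, contributing $\sum_t \mathbb{E}|\hat{\rho}_t - \rho(v_t)| = O(n^{3/2})$, plus (ii) probing displacement, which by a Knuth-style analysis remains $O(\sqrt n)$ per element in expectation as long as the local load factor is bounded away from $1$. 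Invoking the Diaconis--Graham inequality
\[
K(\pi,\rho) \;\geq\; \tfrac{1}{2}\sum_v |\pi(v)-\rho(v)|
\]
to convert footrule to Kendall--tau then yields the claimed $O(n^{3/2})$ expected inversions.

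\smallskip

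For the lower bound I would turn the same footrule $\leftrightarrow$ Kendall--tau relation around and argue that the per-element footrule error is unavoidably $\Omega(\sqrt n)$ on average. Fix any (possibly randomized) algorithm; its placement $\pi(v_t)$ is a measurable function of the history $\mathcal{H}_t$ of relative ranks and of the algorithm's private coins. Given $\mathcal{H}_t$, the true rank $\rho(v_t)$ has a hypergeometric-type posterior of variance $\Theta(n(n-t)/t)$, and for any function $f$ of $\mathcal{H}_t$ alone,
\[
\mathbb{E}\bigl[|\pi(v_t)-\rho(v_t)|\bigr] \;\geq\; \mathbb{E}\bigl[\,|\,\mathrm{median}(\rho(v_t)\mid\mathcal{H}_t) - \rho(v_t)|\,\bigr].
\]
An anti-concentration bound for this posterior---the ``generic balls and bins sampling process'' the introduction alludes to---shows the right-hand side is $\Omega(\sqrt n)$ uniformly for $t \in [n/4, 3n/4]$. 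Summing over these $\Theta(n)$ arrivals and applying the $K \geq \tfrac12 F$ direction of Diaconis--Graham yields $\mathbb{E}[K(\pi,\rho)] = \Omega(n^{3/2})$.

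\smallskip

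The main obstacle I anticipate lies in the collision analysis for the upper bound. Unlike idealized linear probing where incoming keys are uniform and independent, here the targets $\hat{\rho}_t$ are outputs of a noisy, history-dependent estimator whose error in an interval is determined by how many observed $r_t$'s happen to point there. Controlling the local load factor so that displacement stays $O(\sqrt n)$ in the final phase of arrivals---where certain buckets are unavoidably near capacity---will likely require either a carefully engineered bucketing scheme (roughly $\sqrt n$ buckets of width $\sqrt n$, which is one reason this rate is natural) or a delicate martingale argument about bucket loads. For the lower bound, the remaining care is in ensuring the posterior anti-concentration holds conditionally on the realized history rather than merely in expectation, which should follow from a local central limit theorem for the hypergeometric distribution.
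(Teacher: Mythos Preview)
Your high-level decomposition for both bounds matches the paper's, and your lower bound sketch is essentially the paper's argument: a pointwise anti-concentration bound $\Pr(\rank{a_t}=r\mid\mathcal{H}_t)=O(1/\sqrt n)$ for the hypergeometric posterior, giving $\Omega(\sqrt n)$ expected footrule error for each $t=\Theta_1(n)$, then summing and invoking $K\geq F/2$.

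There is, however, a genuine gap in the upper bound, and it is precisely the obstacle you flag. With the deterministic estimator $\hat\rho_t=\lceil r_t\cdot n/t\rceil$, the targets are neither uniform on $[n]$ nor identically distributed across $t$ (for each $t$ the target lands on a grid of spacing $n/t$), so a Knuth-style linear-probing analysis does not apply directly; you say as much and leave the displacement bound unresolved. The paper closes this gap with a small but decisive trick you are missing: instead of rounding, sample $x_t$ \emph{uniformly} in the real interval $[r_t\cdot n/t,\,(r_t+1)\cdot n/t]$ and set $\erank{a_t}=\lceil x_t\rceil$. Because $r_t$ is uniform on $\{0,\ldots,t-1\}$ and independent of the previous $r_{t'}$'s (this follows from the recursive construction of a random permutation), the perturbed targets $\erank{a_1},\ldots,\erank{a_n}$ are then i.i.d.\ uniform on $[n]$. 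With that in hand, the displacement analysis becomes a clean occupancy argument: an interval of size $s$ around $\erank{a_t}$ is ``full'' at time $t$ with probability $e^{-\Omega(s\cdot(n-t)^2/t^2)}$, giving $\Ex\card{\erank{a_t}-\pi(a_t)}=O\bigl(\max(1,(t/(n-t))^2)\bigr)$. Note this is \emph{not} $O(\sqrt n)$ uniformly in $t$ as you assert; it is $O(1)$ for $t<n/2$, grows as the load factor $t/n\to 1$, and one uses the trivial bound $n$ for the last $\sqrt n$ arrivals. The sum $\sum_{t\leq n-\sqrt n}(t/(n-t))^2+\sqrt n\cdot n$ is what gives $O(n\sqrt n)$.

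Two minor corrections. First, for the upper bound you want the $K\leq F$ direction of Diaconis--Graham, not $K\geq F/2$ as written. Second, the per-element estimation error is $O(n/\sqrt t)$ rather than $O(\sqrt n)$; your total $O(n^{3/2})$ is still correct since $\sum_t n/\sqrt t=O(n\sqrt n)$.
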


There are two main challenges in designing an algorithm for secretary ranking. In
earlier time steps, there are only a small number of comparisons observed and
these do not contain  sufficient information to estimate the true rank of the
arriving elements.   In later time steps,  we observe a large number  of
comparisons and using the randomness of elements arrival, the true rank of the
elements can be estimated well. However, we face another difficulty for these
time steps:  many of the positions have already been assigned to some element
arrived earlier and are hence not available. The first information-theoretic
challenge exacerbates this second issue.  Previous bad  placements might imply
that all the desired positions are unavailable for the current element, causing
a large cost even for an element which we can estimate its true rank accurately.  

The algorithm needs to handle these two opposing challenges simultaneously. The
main idea behind our algorithm is to estimate the rank of the current element
using the observed comparisons and  then add some noise to these estimations to
obtain additional randomness in the positions and avoid positively correlated
mistakes. We then  assign the current element to the closest empty position to this noisy estimated rank. The main technical interest is in the
analysis of this algorithm. We draw a connection to the analysis of linear
probing in the hashing literature~\cite{knuth1963notes} to argue that under
this extra noise,  there exists typically an empty position that is close to the
estimated rank.

For the lower bound, we analyze the number of random pairwise comparisons needed
to estimate the rank of an element accurately. Such results are typically
proven in the literature by using {anti-concentration} inequalities. A main
technical difficulty is that most of the existing anti-concentration
inequalities are for {independent} random variables while there is a
{correlation} between the variables we are considering. We  prove, to the
best of our knowledge, a new anti-concentration inequality for a generic balls
in bins problem that involves correlated sampling.

\paragraph{Sparse Case ($m \gg n$).}
Next we consider the case where there are considerably more positions than secretaries and
compute how large the number $m$ of positions needs to be such that we incur no
inversions. Clearly for $m = 2^{n+1}-1$ it is possible to obtain zero inversions
with probability $1$ and for any number less than that it is also clear that any
algorithm needs to cause inversions with non-zero probability. If we only want
to achieve zero inversions with high probability, how large does $m$ need to be? 
By showing a connection between the secretary problem and random binary trees,
we show that for  $m \geq n^\alpha$ for $\alpha \approx 2.998$ it is possible
to design an algorithm that achieves zero inversion with probability $1 - o(1)$.
The constant $\alpha$ here is obtained using the high probability bound on the height a random binary tree of $n$ elements.

\paragraph{General Case.}
Finally, we combine the algorithms for the dense and sparse cases to obtain a general algorithm with a bound on the expected number of inversions which smoothly interpolates between the bounds obtained for the dense and sparse cases. 
This algorithm starts by running the algorithm for the sparse case and when two
elements are placed very closed to each other by the sparse algorithm, we switch
to use the algorithm for the dense case to assign a position to remaining elements with rank between these two close elements.

\subsection{Related Work}

Our work is inserted in the vast line of literature on the secretary problem,
which we briefly discussed earlier. There has been a considerable amount of work
on  multiple-choice secretary problems where, instead of the single best
element,  multiple elements can be chosen as they arrive online
\cite{kleinberg2005multiple,babaioff2007knapsack,babaioff2007matroids,bateni2010submodular,rubinstein2016beyond,KorulaP09}.
We note that in multiple-choice secretary problems, the decision at arrival of
an element is still binary, whereas in secretary ranking one of $n$ positions must
be chosen. More closely related to our work is a paper of Babichenko \etal~\cite{babichenko2017stable} where
elements that arrive must also be assigned to a position. However, the objective
is different and the goal, which uses a game-theoretic notion of stable
matching, is to maximize the number of elements that are not in a blocking pair. 
Gobel~\etal~\cite{GobelKT15} also studied an online appointment scheduling problem in which the goal is to assign starting dates 
to a set of jobs arriving online. The objective here is again different from the secretary ranking problem and is to minimize the total weight time of the jobs. 

Another related line of work in machine learning is the
well-known problem of learning to rank that has been extensively studied in recent years (e.g.
\cite{burges2005learning,cao2007learning,burges2007learning,xia2008listwise}).
Two important applications of this problem are  search engines for document
retrieval
\cite{liu2009learning,radlinski2005query,liu2007letor,cao2006adapting,xu2007adarank}
and collaborative filtering approaches to recommender systems
\cite{shi2010list,shi2012climf,liu2008eigenrank,wang2008probabilistic}. There
has been significant  interest recently in ranking from pairwise comparisons
\cite{feige1994computing,busa2013top,chen2015spectral,shah2015simple,jang2016top,
heckel2016active,davidson2014top,braverman2016parallel,agarwal2017learning}. To the best of our
knowledge, there has not been previous work on ranking from pairwise comparisons
in the online setting.

Finally, we also briefly discuss hashing, since our main technique is related to
linear probing. Linear probing is a
classic implementation of hash tables and was first analyzed theoretically by
Knuth in 1963~\cite{knuth1963notes}, in a report which is now regarded as the
birth of algorithm analysis. Since then, different variants of this problem
mainly for hash functions with limited independence have been considered in the
literature~\cite{schmidt1990analysis,pagh2007linear,patracscu2010k}. Reviewing
the vast literature on this subject is beyond the scope of our paper and we
refer the interested reader to these papers for more details. 

\smallskip
\noindent
\textbf{Organization.} The rest of the paper is organized as follows.
In Section~\ref{sec:setup} we formalize the problem and present the notation
used in the paper. In Section~\ref{sec:alg}, we present and analyze our
algorithm for $m=n$ case.
Section~\ref{sec:lower-bound} is devoted to showing the lower bound also for
$m=n$ case. Our results for the general case when $m$ can be different from $n$ appear in Section~\ref{sec:sparse}.
Missing proofs and standard concentration bounds are
postponed to the appendix.

\section{Problem Setup}\label{sec:setup}

In the secretary ranking problem, there are $n$ elements $a_1, \hdots, a_n$ that
arrive one at a time in an online manner and in a uniformly random order. There is a total ordering among the
elements, but  the algorithm has only access to pairwise
comparisons among the elements that have already arrived. In other words, at
time $t$, the algorithm only observes whether $a_i < a_j$ for all $i,j \leq t$. 

We define the rank function $\textsf{rk} : \{a_1, \hdots, a_n\} \rightarrow [n]$ as the
true rank of the elements in the total order, i.e., $a_i < a_j$ iff $\rank{a_i}
< \rank{a_j}$. 
Since the elements arrive uniformly at random, $\rank{\cdot}$ is a
random permutation. Upon arrival of an element $a_t$ at time step $t$, the algorithm must, irrevocably,
place $a_t$ in a position $\pi(a_t) \in [m]$ that is not yet occupied, in the
sense that for $a_t \neq a_s$ we must have $\pi(a_s) \neq \pi(a_t)$. Since the
main goal of the algorithm is to place the elements as to reflect the true rank
as close as possible\footnote{In other words, hire the better secretaries in
better positions}, we refer to $\pi(a_t)$ as the \emph{learned rank} of $a_t$. The goal is to
minimize the number of pairwise mistakes induced by the learned ranking compared
to the true ranking. A pairwise mistake, or an inversion, is defined as a pair
of elements $a_i, a_j$ such that $\rank{a_i} < \rank{a_j}$ according to the true
underlying ranking but $\pi(a_i) > \pi(a_j)$ according to the learned ranking.

The secretary ranking problem generalizes the secretary problem in the following
sense: in the secretary problem, we are only interested in finding the element
with the highest rank. 
However, in the secretary ranking problem, the goal is to assign a rank to every
arrived element and construct a complete ranking of all elements. Similar 
to the secretary problem, we make the enabling assumption  that  the order of
elements arrival is  uniformly random.\footnote{It is straightforward to verify
that when the ordering is adversarial, any algorithm incurs the trivial cost of
$\Omega(n^2)$ for $m=n$. For completeness, a proof is provided in
Appendix~\ref{sec:appadversarial}.} We measure the cost of the algorithm in
expectation over the randomness of both the arrival order of elements and the
algorithm.

\paragraph{Measures of sortedness.}
We point out that the primary goal in the secretary ranking problem is to learn an
ordering $\pi$ of the input elements which is as close as possible to their
sorted order. As such, the \emph{cost} suffered by an algorithm is given by a
\emph{measure of sortedness} of $\pi$ compared to the true ranking. There are
various measures of sortedness studied in the literature depending on the
application. Our choice of using the number of inversions, also known as
\emph{Kendall's tau} measure, as the cost of algorithm is motivated by the
importance of this measure and its close connection to other measures such as
\emph{Spearman's footrule} (see, e.g., Chapter 6B in~\cite{Diaconis88}). 

For a mapping $\pi:[n] \rightarrow [m]$, Kendall's tau $K(\pi)$ measures the number of
inversions in $\pi$, i.e.:
$$K(\pi): = \vert\{(i,j); (\pi(a_i) - \pi(a_j))(\rank{a_i} - \rank{a_j}) < 0
\}\vert.$$
When $n=m$, another important measure of sortedness is Spearman's footrule $F(\pi)$ given by:
$ F(\pi) :=
\sum_{i=1}^{n} \card{\rank{a_i}- \pi(a_i)},$ which corresponds to the summation of distances
between the true rank of each element and its current position.  A celebrated
result of Diaconis and Graham~\cite{DiaconisG77} shows that these two measures
are within a factor of two of each other, namely, $K(\pi) \leq F(\pi)
\leq 2 \cdot K(\pi)$. We refer to this inequality as the DG inequality throughout the paper.
Thus, up to a factor of two, the goals of minimizing the Kendall tau or
Spearman's footrule distances are equivalent and, while the Kendall tau distance
is used in the formulation of the problem, we also use the Spearman's footrule
distance in the analysis.



\newcommand{\set}[1]{\ensuremath{\{#1\}}}
\section{Dense Secretary Ranking}\label{sec:alg}

We start by analyzing the dense case, where $m = n$ and both the true
rank $\rank{\cdot}$ and the learned rank $\pi(\cdot)$ are permutations over $n$
elements. Our main algorithmic result is the following theorem.

\begin{theorem}\label{thm:dense}
There exists an algorithm for the secretary ranking problem with $m=n$ that incurs a cost of $O(n\sqrt{n})$ in expectation.
\end{theorem}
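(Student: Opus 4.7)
The plan is to design an algorithm in the style sketched in the introduction and analyze its expected cost through Spearman's footrule. Concretely, when $a_t$ arrives, let $r_t\in[t]$ denote its rank among $\{a_1,\ldots,a_t\}$, which is observable from the comparisons revealed so far. I would form the estimate $\hat r_t = \lceil n\cdot r_t/t\rceil$, perturb it by an independent noise $\xi_t$ sampled uniformly from $[-c\sqrt{n},\,c\sqrt{n}]$ for a small absolute constant $c$, and place $a_t$ at the empty slot nearest to $\tilde r_t := \hat r_t+\xi_t$ (breaking ties arbitrarily, and clamping $\tilde r_t$ into $[n]$). The purpose of $\xi_t$ is to ensure that placements do not cluster deterministically around the estimated rank, so that the pattern of empty slots behaves like a random process amenable to a linear-probing analysis.

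By the Diaconis--Graham inequality stated in the setup, $K(\pi)\le F(\pi)=\sum_t|\pi(a_t)-\rank{a_t}|$, so it suffices to show $\Exp[F(\pi)] = O(n^{3/2})$. I would decompose
\[
|\pi(a_t)-\rank{a_t}|\;\le\; \underbrace{|\hat r_t-\rank{a_t}|}_{E_t} \;+\; \underbrace{|\xi_t|}_{N_t} \;+\; \underbrace{|\pi(a_t)-\tilde r_t|}_{D_t}
\]
and bound the three sums separately. For the estimation error $E_t$, conditional on the set of arrived elements after $t$ steps the identity of $a_t$ is uniform in that set; equivalently, conditional on $r_t$ the true rank $\rank{a_t}$ is distributed as the $r_t$-th order statistic of $t$ uniform draws without replacement from $[n]$. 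Standard hypergeometric tail bounds give $\Exp[E_t]=O(n/\sqrt{t})$, so $\Exp[\sum_{t=1}^n E_t]=O(n^{3/2})$. The noise term is immediate: $\Exp[N_t]=O(\sqrt{n})$ gives $\Exp[\sum_t N_t]=O(n^{3/2})$.

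The crux is bounding $\Exp[\sum_t D_t]$; this is the linear-probing part. The displacement $D_t$ equals the distance from $\tilde r_t$ to the nearest empty slot, and hence is controlled by the length of the maximal contiguous block of occupied slots containing $\tilde r_t$. I would argue that the uniform noise smooths the distribution of desired positions: conditional on $\hat r_t$ and on the history, $\tilde r_t$ is (nearly) uniform over a window of length $\Theta(\sqrt{n})$, so that any fixed interval of length $\ell$ is targeted at each step with probability $O(\ell/\sqrt n)$. Adapting Knuth's cluster-size argument to this setting, I expect to show that at every time $t$ the maximal occupied cluster has length $O(\sqrt{n})$ with high probability, so that $\Exp[D_t]=O(\sqrt{n})$ and summing gives $O(n^{3/2})$.

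The main obstacle is precisely this displacement bound. Unlike textbook linear probing, the effective ``hash values'' $\tilde r_t$ are neither independent nor identically distributed across $t$: the estimate $\hat r_t$ depends on the entire history of arrivals and observed comparisons, and for small $t$ it is essentially uninformative. I would handle the small-$t$ regime ($t\le\sqrt n$) separately, noting that even the trivial bound $|\pi(a_t)-\rank{a_t}|\le n$ contributes only $O(n\cdot\sqrt n)=O(n^{3/2})$. For $t\ge\sqrt n$, the estimate is accurate enough that, after conditioning on the set of currently occupied slots, the fresh independent noise $\xi_t$ distributes $\tilde r_t$ smoothly enough over its window to invoke a linear-probing-style cluster bound. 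Combining the three bounds then yields $\Exp[F(\pi)]=O(n^{3/2})$ and hence $\Exp[K(\pi)]=O(n^{3/2})$ as claimed.
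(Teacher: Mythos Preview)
Your overall plan---estimate the rank, add noise, place at the nearest empty slot, then bound Kendall's tau via the footrule split into an estimation term and a linear-probing displacement term---is exactly the paper's, and your estimation bound $\Exp[E_t]=O(n/\sqrt t)$ matches Lemma~\ref{lem:estimation}.

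The gap is in the displacement analysis, and it stems from a misconception. You state that the targets $\tilde r_t$ are ``neither independent nor identically distributed'' because $\hat r_t$ depends on the history. In fact the relative rank $r_t$ of the $t$-th arrival among the first $t$ elements is uniform on $[t]$ \emph{and independent of} $r_1,\ldots,r_{t-1}$; this is just the sequential construction of a uniformly random permutation. So the $\hat r_t$ (and hence the $\tilde r_t$) are already mutually independent. The paper exploits this by taking the noise width to be $n/t$, not a fixed $\Theta(\sqrt n)$: it samples $x_t$ uniformly from $[r_t\cdot n/t,\,(r_t{+}1)\cdot n/t]$ and sets $\erank{a_t}=\lceil x_t\rceil$. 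Since $r_t$ is uniform on $\{0,\ldots,t-1\}$, $x_t$ is uniform on $[0,n]$, so $\erank{a_1},\ldots,\erank{a_n}$ are \emph{i.i.d.\ uniform on $[n]$} (Proposition~\ref{prop:random-assignment}). This reduces the assignment step to textbook linear probing with a fully random hash, and a Chernoff bound on interval occupancies yields $\Exp\card{\erank{a_t}-\pi(a_t)}=O\bigl(\max(1,(t/(n-t))^2)\bigr)$ (Lemma~\ref{lem:cost}), which sums to $O(n^{3/2})$.

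With your fixed-width noise the targets are not uniform, and the quantitative claims you make do not go through. The only density bound you state, ``each step hits a length-$\ell$ interval with probability $O(\ell/\sqrt n)$,'' is too weak for a cluster argument: summed over $t$ steps it gives expected occupancy $O(t\ell/\sqrt n)$, which already exceeds $\ell$ once $t\gg\sqrt n$, so it provides no control in the regime that matters. Separately, the assertion ``the maximal occupied cluster has length $O(\sqrt n)$ w.h.p.\ at every $t$'' is false for $t>n-\sqrt n$: with fewer than $\sqrt n$ empty slots remaining, some cluster has length exceeding $\sqrt n$ by pigeonhole. The paper's per-step displacement bound correspondingly grows like $(t/(n-t))^2$ and the last $\sqrt n$ steps are handled by the trivial $D_t\le n$; your sketch asserts a uniform $O(\sqrt n)$ bound for all $t$ without isolating this late regime.
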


\subsection{Description of the Algorithm}

The general approach behind the algorithm in Theorem~\ref{thm:dense} is as follows.
\smallskip
\begin{mdframed}[hidealllines=false,backgroundcolor=gray!10,innertopmargin=10pt]
	Upon the arrival of element $a_{t}$ at time step $t$:
	\vspace{-2pt}
\begin{enumerate}
	\item \textbf{Estimation step:} Estimate the true rank of the arrived element $a_{t}$ using the \emph{partial} comparisons seen so far.
	\item \textbf{Assignment step:} Find the nearest currently unassigned rank to this estimate and let $\pi(a_{t})$ be this position.
\end{enumerate}
\end{mdframed}

We now describe the algorithm in more details. A natural way to estimate the rank of the $t$-th element in the estimation step is to compute
the rank of this element with respect to the previous $t-1$ elements seen so far and then scale this number to
obtain an estimate to rank of this element between $1$ and $n$. However, for our analysis of the assignment step, we need to tweak this approach slightly: instead of simply rescaling and rounding, we add  perturbation
to the estimated rank and then round its value. This gives a nice distribution of estimated ranks which is crucial for the analysis of the assignment step. The assignment step then simply 
assigns a learned rank to the element as close as possible to its estimated rank. We formalize the algorithm in Algorithm~\ref{alg:dense}.

\medskip

\begin{algorithm2e}[H]

\caption{Dense Ranking}\label{alg:dense}

{ \textbf{Input:} a set of $n$ positions, denoted here by $[n]$, and at most $n$ online arrivals.}

\For{\textnormal{any time step $t \in [n]$ and element $a_{t}$}}{

	Define $r_t := \card{\set{a_{t'} \mid a_{t'} < a_{t} \textnormal{~and~} t' < t}}$.

  Sample $x_t$ uniformly in the real interval $[r_t \cdot
  \frac{n}{t}, (r_t+1) \cdot \frac{n}{t}]$ and choose $\erank{a_{t}} = \lceil
  x_t \rceil$.
	

	Set the learned rank of $a_{t}$ as $\pi(a_{t}) = \arg\min_{i \in R}
  \card{i - \erank{a_{t}}}$ and remove $i$ from $R$.
	
}
\end{algorithm2e}

\medskip

We briefly comment on the runtime of the algorithm. By using any self-balancing binary search tree---such as a red-black tree or an AVL tree---to store the ranking of the arrived elements as well as the set $R$ of available ranks separately, 
Algorithm~\ref{alg:dense} is implementable in $O(\log n)$ time for each step, so total $O(n\log{n})$ worst-case time.

We also note some similarity between this algorithm and linear probing in hashing. Linear probing is an approach to resolving collisions in hashing where, when a  key is hashed to a non-empty cell,  the closest neighboring cells are visited until  an empty location is found for the key. The similarity is apparent to our assignment step which finds the nearest currently unassigned rank to the estimated rank of an element. The analysis of the assignment step follows similar ideas as the analysis for the linear probing hashing scheme.

\subsection{The Analysis}\label{sec:analysis} 

For $m=n$, the total number of inversions can be approximated within a factor
of $2$ by the Spearman's footrule. Therefore, we can write the cost of
Algorithm~\ref{alg:dense} (up to a factor $2$) as follows:
$$	\sum_{t=1}^{n} \card{\rank{a_{t}} - \pi(a_{t})} \leq \sum_{t=1}^{n} \card{\rank{a_{t}} - \erank{a_{t}}} + \sum_{t=1}^{n} \card{\erank{a_{t}} - \pi(a_{t})}. \label{eq:cost}$$

This basically breaks the cost of the algorithm in two parts: one is the cost incurred by the estimation step and the other one is the cost of the assignment step. 
Our analysis then consists of two main parts where each part bounds one of the terms in the RHS above. In particular, we first prove that given the partial comparisons seen so far, we can obtain a relatively good estimation
to the rank of the arrived element, and then in the second part, we show that we can typically find an unassigned position in the close proximity of this estimated rank to assign to it. The following two lemmas capture each part
separately. {In both lemmas, the randomness in the expectation is taken
over the random arrivals and the internal randomness of the algorithm:}

\begin{lemma}[Estimation Cost]\label{lem:estimation}
	In Algorithm~\ref{alg:dense}, $
		\Ex\bracket{\sum_{t=1}^{n}\card{\rank{a_{t}} - \erank{a_{t}}}} = O(n\sqrt{n}).$
\end{lemma}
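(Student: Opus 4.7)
The plan is to bound $\Ex|\rank{a_t} - \erank{a_t}|$ pointwise for each $t$ and then sum. Fix $t$ and condition on the true rank $\rank{a_t} = k$. By exchangeability of the arrival order, conditioned on $\rank{a_t}=k$ and on the identity of $a_t$, the set of the first $t-1$ arrivals is a uniformly random $(t-1)$-subset of the remaining $n-1$ elements, of which $k-1$ are smaller than $a_t$. Hence $r_t$ follows the hypergeometric distribution $\mathrm{Hyp}(n-1,k-1,t-1)$, so
\[
\Ex[r_t \mid \rank{a_t}=k] = (t-1)\cdot\frac{k-1}{n-1}, \qquad \var{r_t \mid \rank{a_t}=k} \;\leq\; \frac{(t-1)(k-1)(n-k)}{(n-1)^2}.
\]

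Now $\erank{a_t}$ is obtained by first sampling $x_t$ uniformly in $[r_t n/t,(r_t+1)n/t]$ and then rounding up, so $\erank{a_t} = r_t\cdot\tfrac{n}{t} + \xi_t$ with $|\xi_t|\leq n/t + 1$. Writing $Y_t := r_t\cdot\tfrac{n}{t}$, its conditional mean is within $O(1)$ of $k$ and its conditional variance is at most $(n/t)^2\cdot\var{r_t\mid\rank{a_t}=k} = O\!\bigl(k(n-k)/t\bigr) = O(n^2/t)$. By Cauchy--Schwarz (or $\Ex|Z-\Ex Z|\leq \sqrt{\var{Z}}$) applied conditionally,
\[
\Ex\bigl[\bigl|Y_t - k\bigr|\,\big|\,\rank{a_t}=k\bigr] \;\leq\; \sqrt{\var{Y_t\mid\rank{a_t}=k}} + \bigl|\Ex[Y_t\mid\rank{a_t}=k] - k\bigr| \;=\; O(n/\sqrt{t}) + O(1).
\]
Adding in the noise $\xi_t$ and taking expectation over $k$ yields $\Ex|\erank{a_t}-\rank{a_t}| = O(n/\sqrt{t}) + O(n/t)$.

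Summing over $t\in[n]$ then gives
\[
\sum_{t=1}^n \Ex\bigl|\rank{a_t} - \erank{a_t}\bigr| \;\leq\; \sum_{t=1}^n \Bigl(O(n/\sqrt{t}) + O(n/t)\Bigr) \;=\; O(n\sqrt{n}) + O(n\log n) \;=\; O(n^{3/2}),
\]
using $\sum_{t=1}^n t^{-1/2} = O(\sqrt{n})$. There is no serious obstacle: the only bookkeeping to be careful with is verifying the hypergeometric variance bound and confirming that the small-$t$ regime is harmless (for $t=1$ the bound $n/\sqrt{t}=n$ trivially dominates the worst-case per-step error of $n$, so the estimate remains valid at the boundary).
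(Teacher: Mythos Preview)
Your proof is correct and reaches the same $O(n/\sqrt{t})$ per-step bound, but by a genuinely different route than the paper. The paper proves an exponential tail bound (Proposition~\ref{lem:quantile-estimation}) via the Chernoff bound for sampling without replacement, namely $\Pr\bigl(|\rank{a_t}-\erank{a_t}| \geq 1 + n/t + \alpha(n-1)/\sqrt{t-1}\bigr) \leq e^{-\Omega(\alpha^2)}$, and then integrates the tail in $\alpha$ to obtain the expectation. You instead work directly with the hypergeometric second moment and apply $\Ex|Z-\Ex Z|\leq \sqrt{\var{Z}}$, which is more elementary and avoids tail integration entirely. The paper's approach yields a stronger concentration statement as a byproduct (potentially reusable elsewhere), while yours is shorter for the purpose of bounding this particular expectation.

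One small slip to fix: the conditional mean of $Y_t = r_t\cdot n/t$ is not within $O(1)$ of $k$ in general. A short computation gives $\bigl|\Ex[Y_t\mid\rank{a_t}=k]-k\bigr| = O(n/t)$ (for instance, at $t=2$ and $k=n$ the bias is of order $n/2$). This does no harm to your argument, since you already carry an $O(n/t)$ term from the perturbation $\xi_t$; just replace the ``$+\,O(1)$'' in the displayed bound on $\Ex[|Y_t-k|\mid\rank{a_t}=k]$ by ``$+\,O(n/t)$'' and the rest goes through unchanged.
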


\begin{lemma}[Assignment Cost]\label{lem:assignment}
	In Algorithm~\ref{alg:dense}, $\Ex\bracket{\sum_{t=1}^{n}\card{\erank{a_{t}} - \pi(a_{t})}} = O(n\sqrt{n}).$
\end{lemma}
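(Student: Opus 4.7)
The plan rests on a crisp distributional identity: the estimated ranks $\erank{a_1}, \ldots, \erank{a_n}$ are mutually independent and uniformly distributed on $[n]$. Once established, the assignment step reduces exactly to a classical hashing problem: insert $n$ keys with i.i.d.\ uniform hashes into an $n$-slot table, resolving collisions by probing to the nearest empty slot. The quantity $|\erank{a_t} - \pi(a_t)|$ is then precisely the probing displacement, and the lemma follows from an $O(n^{3/2})$ total-displacement bound for this process, in the spirit of Knuth's classical analysis of linear probing at load factor $1$.

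To establish the i.i.d.\ uniformity, I would invoke the classical Lehmer-code bijection between permutations of $[n]$ and sequences $(r_1, \ldots, r_n)$ with $r_t \in \{0, \ldots, t-1\}$: since the arrival order is uniformly random, the $r_t$'s used by the algorithm are mutually independent with $r_t$ uniform on $\{0, \ldots, t-1\}$. Conditional on $r_t$, $x_t$ is drawn uniformly in a length-$n/t$ sub-interval of $[0, n]$, and marginalizing over $r_t$ gives $x_t$ uniform on $[0, n]$. Independence across $t$ follows from the independence of the $r_t$'s combined with the fresh internal randomness used for each draw of $x_t$. Taking ceilings yields the claim for $\erank{a_t}$.

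For the probing analysis, I would adapt the standard linear-probing technology via a cluster argument: at each time step, partition the currently occupied positions into maximal runs (clusters) of consecutive slots, and bound the displacement at that step by (at most half) the length of the cluster containing the hash. Using the i.i.d.\ uniform distribution of hashes, a length-$s$ cluster forces at least $s$ hashes to lie in a nearby interval of length $\Theta(s)$, a statement controllable by a Chernoff / union-bound applied to the static (unordered) hash configuration rather than the evolving occupied set. Summing the squared cluster-length contributions over all times, with the dominant regime being clusters of length $\Theta(\sqrt{n})$, yields the target $O(n^{3/2})$ bound. The main obstacle will be the two-sided nature of nearest-empty probing (in contrast to Knuth's one-sided linear probing) together with the correlations in the evolving cluster structure; anchoring the cluster bounds on the static hash configuration, where Chernoff applies directly, is the key to decoupling these two issues.
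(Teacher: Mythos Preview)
Your proposal is correct and follows essentially the same route as the paper. The i.i.d.\ uniformity of the $\erank{a_t}$ is exactly the paper's Proposition (proved via the same Lehmer-code reasoning you outline), and the subsequent reduction to a linear-probing displacement analysis via an interval/cluster overfill argument controlled by Chernoff is precisely how the paper proceeds.

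The one place where the paper is sharper than your sketch is the accounting: rather than summing ``squared cluster-length contributions'' in aggregate, the paper proves a clean per-step bound $\Ex\bigl|\erank{a_t}-\pi(a_t)\bigr| = O\bigl(\max(1,(t/(n-t))^2)\bigr)$ by showing that displacement $\geq k$ at time $t$ forces an interval of length $2k$ to receive at least $2k$ of the first $t-1$ (i.i.d.\ uniform) hashes, which has probability $e^{-\Omega(\alpha)}$ once $k \gtrsim \alpha\,(t/(n-t))^2$. Summing this over $t$ (with the range $t\in[n/2,\,n-\sqrt{n}]$ giving the dominant $O(n\sqrt{n})$ and the last $\sqrt{n}$ steps handled trivially) gives the lemma. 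This load-factor-dependent per-$t$ bound is exactly what your phrase ``anchoring on the static hash configuration'' amounts to, and it sidesteps the correlations in the evolving occupied set just as you anticipated; making this time-dependence explicit is the only refinement your sketch is missing.
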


Theorem~\ref{thm:dense} then follows immediately from these two lemmas and
Eq~(\ref{eq:cost}). The following two sections are dedicated to the proof of
each of these two lemmas. We emphasize that the main part of the argument is 
the analysis of the assignment cost, i.e., Lemma~\ref{lem:assignment}, and in particular its connection to linear probing. 
The analysis for estimation cost, i.e., Lemma~\ref{lem:estimation}, follows from standard Chernoff bound arguments. 


\subsubsection{Estimation Cost: Proof of Lemma~\ref{lem:estimation}} \label{sec:estimation}

The correctness of the estimation step in our algorithm relies on the following proposition that bounds the probability of the deviation between
the estimated rank and the true rank of each element (depending on the time step it arrives). The proof uses the Chernoff bound for sampling without replacement. 

\begin{restatable}{rPro}{lemquantileestimation}
\label{lem:quantile-estimation}
For any $t > 1$ and any $\alpha \geq 0$,   $\Pr\Paren{\card{\rank{a_{t}} - \erank{a_{t}}} \geq 1 + {\frac{n}{t}} + \alpha \cdot \frac{n-1}{\sqrt{t-1}}} \leq e^{-\Omega(\alpha^2)}.$
\end{restatable}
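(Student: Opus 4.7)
The plan is to analyze the deviation $\card{\rank{a_t} - \erank{a_t}}$ by conditioning on the true rank of $a_t$ and then invoking a concentration inequality for sampling without replacement. First I would condition on $\rank{a_t} = k$. Under this conditioning, the $t-1$ elements arriving strictly before time $t$ form a uniformly random size-$(t-1)$ subset of the remaining $n-1$ elements, of which exactly $k-1$ are smaller than $a_t$ and $n-k$ are larger. Consequently, $r_t$ is hypergeometrically distributed with mean $\mu := (t-1)(k-1)/(n-1)$.

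Second, I would apply the standard Hoeffding-type tail bound for sampling without replacement, which gives
$$\Pr\bigl[\card{r_t - \mu} \geq \alpha\sqrt{t-1}\bigr] \leq e^{-\Omega(\alpha^2)}.$$
Since this bound is uniform in the conditioning value $k$, no union bound over $k$ is required; we may work on the high-probability event for an arbitrary fixed $k$ and then integrate out.

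Third, on this event I would decompose the error by the triangle inequality as
$$\card{\erank{a_t} - k} \leq \card{\erank{a_t} - r_t \cdot n/t} + \frac{n}{t}\cdot\card{r_t - \mu} + \card{\mu \cdot n/t - k}.$$
From $\erank{a_t} = \lceil x_t \rceil$ with $x_t \in [r_t n/t, (r_t+1)n/t]$, the first term is at most $n/t + 1$. On the Chernoff event the middle term is at most $\alpha (n/t)\sqrt{t-1} \leq \alpha (n-1)/\sqrt{t-1}$, where the last inequality uses $t \leq n$ (equivalent to $n(t-1) \leq t(n-1)$). For the bias term, a direct algebraic manipulation gives $\mu \cdot n/t - k = -[t(n-k) + n(k-1)]/[t(n-1)]$, whose absolute value is monotone in $k$ and maximized at $k = n$, where it equals $n/t$. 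Summing the three contributions yields the stated tail bound, with the additive constants and the bias term absorbed into the $1 + n/t$ piece of the statement.

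The main obstacle is essentially bookkeeping rather than substance: one must invoke Hoeffding's inequality in its sampling-without-replacement form with variance proxy scaled by $t-1$ (not $n$), and one must verify algebraically that the deterministic bias $\card{\mu \cdot n/t - k}$ between the mean of the rescaled hypergeometric and the true rank $k$ is genuinely $O(n/t)$ rather than growing with $k$. Both checks are elementary once the conditioning on $k$ has been set up and $r_t$ has been identified as a hypergeometric random variable.
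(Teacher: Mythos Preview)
Your proposal is correct and follows essentially the same approach as the paper: condition on $\rank{a_t}$, identify $r_t$ as hypergeometric, apply the Hoeffding/Chernoff bound for sampling without replacement with deviation $\alpha\sqrt{t-1}$, and then do algebraic bookkeeping to pass from $r_t$ to $\erank{a_t}$. The only cosmetic difference is that the paper rescales by $(n-1)/(t-1)$ (so the mean lands exactly at $\rank{a_t}-1$ and the bias term vanishes, using that both $\tfrac{n-1}{t-1}r_t$ and $x_t$ lie in $[\tfrac{n}{t}r_t,\tfrac{n}{t}(r_t{+}1)]$), whereas you rescale by $n/t$ and bound the bias $|\mu\cdot n/t - k|\le n/t$ explicitly; your resulting additive slack $1+2n/t$ rather than $1+n/t$ is harmless since the extra $n/t$ is at most a constant multiple of $(n-1)/\sqrt{t-1}$ and is absorbed into the $e^{-\Omega(\alpha^2)}$.
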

\begin{proof}
  Fix any $t \in [n]$ and element $a_{t}$ and recall that $\rank{a_{t}}$ denotes the true rank of $a_t$.
  Conditioned on a fixed value for the rank of $a_{t}$, the distribution of the number of elements $r_t$ that arrived
  before $a_t$ and have a smaller rank is equivalent to a sampling without replacement
  process of $t-1$ balls where the urn has $\rank{a_{t}} - 1$ red balls and $n -
  \rank{a_{t}}$ blue balls (and the goal is to count the number of red balls). As such $\Ex[r_t] = \frac{\rank{a_{t}} - 1}{n-1}$
  and by the Chernoff bound for sampling without replacement (Proposition
  \ref{prop:chernoff_negative} with $a = n$ and $b=t-1$), we have:
  \begin{align*}
   \Pr\Paren{\card{r_t - \Ex\bracket{r_t}} \geq \alpha \sqrt{t-1}} \leq 2\cdot
  \exp \Paren{-\frac{2(\alpha \sqrt{t-1})^2}{t-1} } = e^{-\Omega(\alpha^2)}.
  \end{align*}
  
  We now argue that 
  	\begin{align*}
    \Pr\Paren{\card{\rank{a_{t}} - \erank{a_{t}}} \geq 1 + {\frac{n}{t}} + \alpha \cdot \frac{n-1}{\sqrt{t-1}}} \leq \Pr\Paren{\card{r_t - \Ex\bracket{r_t}} \geq \alpha \sqrt{t-1}}. 
\end{align*}
which finalizes the proof by the bound in above equation. 

To see this, note that, 
  \begin{align*}
  \alpha \frac{n-1}{\sqrt{t-1}} \geq \card{  \frac{n-1}{t-1} r_t -
  \rank{a_{t}} } \geq \card{ x_t - \rank{a_{t}} } - \frac{n}{t} \geq \card{
    \erank{a_{t}} - \rank{a_{t}} } - 1 - \frac{n}{t}
    \end{align*}
    
  The first inequality follows from substituting the expectation in $\card{r_t
  - \Ex\bracket{r_t}} \geq \alpha \sqrt{t-1}$ and multiplying the whole
  expression by $(n-1)/(t-1)$. 
  The second inequality just follows from the fact that both the variable
  $x_t$ (defined in step $4$ of Algorithm~\ref{alg:main}) and $\frac{n-1}{t-1} r_t $ are in the interval
  $[\frac{n}{t} r_t, \frac{n}{t} (r_t+1)]$. The fact that $x_t$ is in this
  interval comes directly from its definition in the algorithm and the fact that
  $\frac{n-1}{t-1} r_t $ is in the interval is by a simple calculation (see Proposition~\ref{prop:ratios} in Appendix~\ref{sec:appalg}). 
  The last inequality follows from the fact that
  $\erank{a_{t}} = \ceil{x_t}$. 
\end{proof}

We are now ready to prove Lemma~\ref{lem:estimation}. 

\begin{proof}[Proof of Lemma~\ref{lem:estimation}]
	Fix any $t > 1$; we have, 
	\begin{align*}
    \Ex\bracket{\card{\rank{a_{t}} - \erank{a_{t}}}{-1- \frac{n}{t}} } &\leq \int_{\alpha=0}^{\infty} \Pr\Paren{\card{\rank{a_{t}} -
    \erank{a_{t}}} {-1 - \frac{n}{t}} \geq \alpha \cdot \frac{n-1}{\sqrt{t-1}}}
    \cdot \frac{n-1}{\sqrt{t-1}} \cdot d\alpha \\
    & \leq  \frac{n-1}{\sqrt{t-1}} \cdot \int_{\alpha=0}^{\infty} e^{-\Omega(\alpha^2)}
    \cdot d\alpha = O\Paren{\frac{n}{\sqrt{t}}}.  \tag{by Proposition~\ref{lem:quantile-estimation}}
	\end{align*}
	
  Hence, using the trivial bound for $t=1$ and the bound above for $t>1$ we conclude that:
  \begin{align*}
		\Ex\bracket{\sum_{t=1}^{n}\card{\rank{a_{t}} - \erank{a_{t}}}} = \sum_{t=1}^{n}\Ex\bracket{\card{\rank{a_{t}} - \erank{a_{t}}}} 
    = \sum_{t=1}^{n} O\left(\frac{n}{t} + \frac{n}{\sqrt{t}}\right) = O(n\sqrt{n}). \qedhere
	\end{align*} 
\end{proof}

\subsubsection{Assignment Cost: Proof of Lemma~\ref{lem:assignment}}\label{sec:assignment}

For the second part of the analysis, it is useful to think of sampling a random
permutation in the following recursive way: given a random permutation over $t-1$ elements,
it is possible to obtain a random permutation over $t$ elements by inserting the $t$-th
element in a uniformly random position between these $t-1$ elements. Formally, given $\sigma:[t-1]
\rightarrow [t-1]$, if we sample a position $i$ uniformly from $[t]$ and
generate permutation $\sigma':[t] \rightarrow [t]$ such that:
$$\sigma'(t') = \left\{ \begin{aligned} 
  & i & & \text{~if~} t' = t \\
  & \sigma(t') & & \text{~if~} t' < t \text{~and~} \sigma'(t') < i \\
  & \sigma(t') + 1 & & \text{~if~} t' < t \text{~and~} \sigma'(t') > i \\
\end{aligned}\right.$$
then $\sigma'$ will be a random permutation over $t$ elements. It is simple to
see that just by fixing any permutation and computing the probability of it
being generated by this process.

Thinking about sampling the permutation in this way is very convenient for this
analysis since at the $t$-th step of the process, the relative order of the
first $t$ elements is fixed (even though the true ranks can only be determined
in the end). In that spirit, let us also define for a permutation $\sigma:[t]
\rightarrow [t]$ the event $\O_\sigma$ that $\sigma$ is the relative ordering of
the first $t$ elements:
$$\O_\sigma = \{ a_{\sigma(1)} < a_{\sigma(2)} < \hdots < a_{\sigma(t)} \}.$$

The following proposition asserts that the randomness of the arrival and the inner randomness of the algorithm, ensures that the estimated ranks at each time step
 are chosen \emph{uniformly at random} from all possible ranks in $[n]$. 

\begin{proposition}\label{prop:random-assignment}
The values of $\erank{a_{1}}, \hdots, \erank{a_{n}}$ are i.i.d and uniformly
chosen from $[n]$.
\end{proposition}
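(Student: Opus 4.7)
The plan is to derive the proposition from two independent sources of randomness: the random arrival order of the elements, which controls $r_t$, and the uniform sample $x_t$ the algorithm draws inside the interval $[r_t \cdot n/t, (r_t+1)\cdot n/t]$. The key observation is that these two pieces compose cleanly into a uniform distribution on $[0,n]$, which after ceiling gives a uniform distribution on $[n]$.

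First I would show that, unconditionally, $r_t$ is uniform on $\{0,1,\dots,t-1\}$ and that the random variables $r_1,\dots,r_n$ are mutually independent. This uses exactly the recursive construction of a random permutation described just above the proposition. When the $t$-th element $a_t$ arrives, its relative rank among $a_1,\ldots,a_t$ is a uniformly random position in $[t]$, independently of the relative ordering of $a_1,\dots,a_{t-1}$; and $r_t$ is precisely that position minus one. Hence $r_t$ is uniform on $\{0,\dots,t-1\}$, and the sequence $(r_1,r_2,\dots,r_n)$ is a vector of independent random variables.

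Next I would combine $r_t$ with $x_t$. Conditioned on $r_t=k$, the algorithm samples $x_t$ uniformly from $[k\cdot n/t,(k+1)\cdot n/t]$. Since the intervals $\{[k\cdot n/t,(k+1)\cdot n/t]\}_{k=0}^{t-1}$ partition $[0,n]$ and $r_t$ is uniform on $\{0,\dots,t-1\}$, a one-line law-of-total-probability computation shows that $x_t$ is uniform on $[0,n]$. The internal randomness used to sample $x_t$ is drawn fresh at each step, and independently of everything else, so the random variables $x_1,\dots,x_n$ are mutually independent as well. Applying the ceiling map gives $\erank{a_t}=\lceil x_t\rceil$ uniform on $\{1,\dots,n\}$ (the endpoint $x_t=0$ contributes measure zero), and the independence across $t$ is preserved since ceiling acts coordinatewise.

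I do not expect a real obstacle here: everything rests on the clean recursive sampling picture for uniformly random permutations and on the fact that the algorithm's intervals $[r_t\cdot n/t,(r_t+1)\cdot n/t]$ were calibrated precisely to tile $[0,n]$ as $r_t$ ranges over $\{0,\dots,t-1\}$. The only minor care needed is to keep straight that the independence claim is over the product of two independent sources — the random arrival order and the algorithm's internal coins — which is exactly the randomness measure the proposition refers to.
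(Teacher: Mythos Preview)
Your proposal is correct and follows essentially the same approach as the paper: both arguments use the recursive sampling view of random permutations to show that $r_t$ is uniform on $\{0,\dots,t-1\}$ and independent of $r_1,\dots,r_{t-1}$, then observe that the algorithm's intervals tile $[0,n]$ so that $x_t$ is uniform on $[0,n]$ and $\erank{a_t}=\lceil x_t\rceil$ is uniform on $[n]$. The paper phrases the independence step by conditioning on the event $\O_\sigma$ for the relative ordering of the first $t-1$ elements, but the content is identical to what you wrote.
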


\begin{proof}
  First let us show that for any fixed permutation $\sigma$ over $t-1$ elements, the
  relative rank $r_t$ defined in the algorithm is uniformly distributed in
  $\{0,\hdots, t-1\}$. In other words:
  $$\Pr[r_t = i \mid \O_\sigma] = \frac{1}{t}, \qquad \forall i \in \{0, \hdots, t-1\}.$$
  Simply observe that there are exactly $t$ permutations over $t$ elements such
  that the permutation induced in the first $t-1$ elements is $\sigma$. Since we
  are sampling a random permutation in this process, each of these permutation are equally likely to happen. Moreover, since each permutation
  corresponds to inserting the $t$-the element in one of the $t$ positions, we obtain the bound. 

  Furthermore, since the probability of each value of $r_t$ does not depend on the induced
  permutation $\sigma$ over the first $t-1$ elements, then $r_t$ is independent
  of $\sigma$. Since all the previous values $r_{t'}$ are completely determined
  by $\sigma$, $r_t$ is independent of all previous $r_{t'}$ for $t' < t$.

  Finally observe that if $r_t$ is random from $\{0, ..., t-1\}$, then $x_t$ is
  sampled at random from $[0,n]$, so $\erank{a_{t}}$ is sampled at
  random from $[n]$. Since for different values of $t \in [n]$, all $r_t$ are independent, all the values of
  $\erank{a_{t}}$ are also independent.
\end{proof}

Now that we established that $\erank{a_{t}}$ are independent and uniform, our
next task is to bound how far from the estimated rank we have to go in the assignment step, before we are able to assign a learned rank to this element. 
This part of our analysis will be similar to the analysis of the linear
probing hashing scheme. If we are forced to let the learned rank of $a_{t}$ be 
far away from $\erank{a_{t}}$, say $\card{ \erank{a_{t}} -
\pi(a_{t})} > k$, then this necessarily means that all positions in the integer interval $[\erank{a_{t}} - k:\erank{a_{t}} + k]$ must have already been assigned as a learned rank of some element. In the following, 
we bound the probability of such event happening for large values of $k$ compared to the current time step $t$.


We say that the integer interval $I = [i : i+s-1]$ of size $s$ is
\emph{popular} at time $t$, iff at least $s$ elements $a_{t'}$ among the $t-1$ elements that appear before the $t$-th element have
estimated rank $\erank{a_{t'}} \in I$. Since by Proposition~\ref{prop:random-assignment} every element has probability
$s/n$  of having estimated rank in $I$ and the estimated ranks are
independent, we can bound the probability that $I$ is popular using
a standard application of Chernoff bound. 

\begin{restatable}{rClaim}{clmlinearprobing}
\label{clm:linear-probing}
Let $\alpha \geq 1$, an interval of size $s \geq 2\alpha \max\Paren{1, \Paren{\frac{t}{n-t}}^2}$ is popular at time $t$ w.p. $e^{-O(\alpha)}$.
\end{restatable}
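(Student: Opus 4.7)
The plan is to reduce this claim to a standard binomial tail bound. By Proposition~\ref{prop:random-assignment}, the estimated ranks $\erank{a_{t'}}$ for $t' < t$ are i.i.d.\ uniform on $[n]$, so each indicator $\II[\erank{a_{t'}} \in I]$ is Bernoulli with parameter $s/n$. Letting $X$ be the sum of these indicators over $t' = 1, \dots, t-1$, we have $X \sim \mathrm{Bin}(t-1, s/n)$ with mean $\mu = (t-1)s/n$, and the claim is exactly the statement $\Pr[X \geq s] \leq e^{-\Omega(\alpha)}$.

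I would then apply the multiplicative Chernoff bound. Writing $s = (1+\delta)\mu$ forces $1+\delta = n/(t-1)$, i.e., $\delta = (n-t+1)/(t-1)$, and Chernoff gives $\Pr[X \geq (1+\delta)\mu] \leq \exp\bigl(-\Omega(\mu \min(\delta, \delta^2))\bigr)$. The argument then splits into two regimes that match exactly the two branches of the hypothesized $\max$. When $t \leq n/2$ we have $\delta \geq 1$, so the relevant quantity is $\mu\delta = s(n-t+1)/n \geq s/2$; in this regime the hypothesis reduces to $s \geq 2\alpha$, so the exponent is $\Omega(\alpha)$. When $t > n/2$ we have $\delta \leq 1$, so the relevant quantity is $\mu\delta^2 = s(n-t+1)^2/(n(t-1))$; substituting $s \geq 2\alpha(t/(n-t))^2$ causes the $(n-t)^2$ factor to cancel, leaving an exponent of order $\alpha t/n = \Omega(\alpha)$. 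Combining the two regimes yields the desired $e^{-\Omega(\alpha)}$ bound.

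The main conceptual obstacle, and the reason the hypothesis takes the peculiar form $s \geq 2\alpha \max\Paren{1, (t/(n-t))^2}$, is that the two branches of the $\max$ correspond to the two regimes of Chernoff. The constant branch handles early times, where $\mu$ is tiny compared to $s$ and the linear-in-$\delta$ Chernoff term already gives $e^{-\Omega(s)}$ decay as soon as $s = \Omega(\alpha)$. The $(t/(n-t))^2$ branch handles late times, where $\mu$ is close to $s$, so $\delta$ is small and only the quadratic Chernoff term is available; extracting $\Omega(\alpha)$ from that term requires $s$ to grow like $(t/(n-t))^2$. Once this matching between regimes is understood, the proof itself is just substitution and elementary calculation; no technical machinery beyond Proposition~\ref{prop:random-assignment} and the Chernoff inequality is required.
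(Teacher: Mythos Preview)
Your proposal is correct and follows essentially the same approach as the paper: both set up the binomial $X=\sum_{t'<t}\II[\erank{a_{t'}}\in I]$ via Proposition~\ref{prop:random-assignment} and apply a multiplicative Chernoff bound, splitting into the regimes $t\le n/2$ and $t>n/2$.

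One small point worth noting: your treatment is actually tighter than the paper's written proof. The paper fixes $\eps=\min(1,(n-t)/t)$ and invokes only the $\exp(-\Omega(\eps^{2}\mu))$ form of Chernoff, which for $t\le n/2$ gives an exponent $\Theta(st/n)$; with merely $s\ge 2\alpha$ this is not $\Omega(\alpha)$ when $t=o(n)$ (the claim is still true there since $s>t-1$ makes popularity impossible, but the Chernoff step as written does not cover it). Your use of the $\exp(-\Omega(\mu\min(\delta,\delta^{2})))$ form, with the linear term $\mu\delta = s(n-t+1)/n\ge s/2\ge\alpha$ in the $\delta\ge 1$ regime, handles all $t\le n/2$ uniformly without needing the trivial $s>t-1$ observation. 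Otherwise the arguments are the same.
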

\begin{proof} The proof follows directly from the Chernoff bound in Proposition~\ref{prop:chernoff-multi}. For $t' \in [t]$, let $X_{t'}$ be the
  event that $\erank{a_{t'}} \in I$ and $X = \sum_{t'=1}^{t} X_{t'}$, then setting
  $\epsilon = \min(1, \frac{n-t}{t})$ we have that:
  \begin{align*}
  \Pr\paren{I~\text{is popular}} &= \Pr\paren{X \geq s} \leq \Pr\paren{X > (1+\eps) \cdot \epsilon \cdot \Ex[X]} \\
 &\leq \exp \Paren{-\frac{\eps^2 \cdot \Ex[X]}{2}} = e^{-O(\alpha)}
  \end{align*}
  as $\Ex[X] =s \cdot t/n$.
\end{proof}

We now use the above claim to bound the deviation between $\erank{a_t}$ and $\pi(a_t)$. The following lemma, although simple, is the key part of the argument.

\begin{lemma}\label{lem:cost} For any $t \leq n$, we have
 $\Ex \card{ \erank{a_{t}} - \pi(a_{t})} = O(\max\Paren{1, \Paren{\frac{t}{n-t}}^2})$.
\end{lemma}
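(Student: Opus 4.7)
My plan is to bound the tail $\Pr[\,|\erank{a_t} - \pi(a_t)| > k\,]$ by relating the event to the existence of a \emph{popular} interval near $\erank{a_t}$, and then to sum the tail to recover the expectation. Throughout, write $r := \erank{a_t}$ and $C := \max(1, (t/(n-t))^2)$.

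The first step is a structural observation about the assignment rule. Since $\pi(a_t)$ is by definition the nearest unoccupied position to $r$, the event $|r - \pi(a_t)| > k$ implies that every position in the integer interval $[r-k, r+k]$ is occupied at time $t$. These $2k+1$ positions lie inside some maximal run $[L, R]$ of consecutively occupied positions, with $L \leq r-k$, $R \geq r+k$, and with both $L-1$ and $R+1$ empty (or out of range). I will then argue that $[L, R]$ must be popular at time $t$: since positions never become vacant, if some earlier element $a_{t'}$ with $\erank{a_{t'}} < L$ had been placed at a position inside $[L, R]$, the already-empty position $L-1$ would have been strictly closer to $\erank{a_{t'}}$ than any position in $[L, R]$, contradicting the nearest-empty rule of Algorithm~\ref{alg:dense}; the symmetric argument rules out estimated ranks greater than $R$. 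Hence all $s := R - L + 1$ elements that occupy $[L, R]$ have estimated ranks in $[L, R]$, and in particular at least $s$ of the first $t-1$ estimated ranks fall in $[L, R]$, so $[L, R]$ is popular in the sense of Claim~\ref{clm:linear-probing}.

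The second step is a union bound. By Proposition~\ref{prop:random-assignment}, $\erank{a_t}$ is uniform on $[n]$ and independent of $\erank{a_1}, \ldots, \erank{a_{t-1}}$, so conditioning on $r = \erank{a_t}$ does not change the popularity probabilities. For each candidate cluster size $s \geq 2k+1$ there are at most $s - 2k$ left endpoints $L$ with $[L, L+s-1] \supseteq [r-k, r+k]$, and by Claim~\ref{clm:linear-probing} each specific interval of size $s$ is popular with probability at most $e^{-\Omega(s/C)}$ once $s \geq 2C$. Summing over $(s, L)$ and estimating the resulting geometric-type series yields a tail bound of the form
\[
\Pr\bigl[\,|\erank{a_t} - \pi(a_t)| > k\,\bigr] \;\leq\; \exp\bigl(-\Omega(k/C)\bigr)
\]
for $k \geq \Omega(C)$ (possibly with an extra polynomial factor in $C$).

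The final step is to integrate the tail: $\Ex |\erank{a_t} - \pi(a_t)| = \sum_{k \geq 0} \Pr[|\erank{a_t} - \pi(a_t)| > k]$. I will use the trivial bound $\Pr \leq 1$ for $k = O(C)$ and the exponential tail above for $k \geq \Omega(C)$, to conclude $\Ex |\erank{a_t} - \pi(a_t)| = O(C)$. The main obstacle I anticipate is extracting \emph{exactly} the order $O(C)$ rather than a higher power of $C$: the naive union bound above carries a polynomial-in-$C$ overhead, analogous to the classical gap between elementary tail analyses and Knuth's generating-function analysis of linear probing. Closing this gap will likely require exploiting that a maximal cluster of size $s$ contains \emph{exactly} $s$ previous estimated ranks (which gives the sharper binomial estimate $\Pr[X_I = s]$ via Stirling rather than $\Pr[X_I \geq s]$), or recasting the union bound so that the $(s - 2k)$ degeneracy from longer intervals is absorbed by the extra decay factor $e^{-\Omega((s - 2k)/C)}$ in the tail.
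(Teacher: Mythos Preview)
Your structural step matches the paper's exactly: both identify the maximal occupied run $[L,R]$ around $r=\erank{a_t}$ and argue it is popular. The divergence is at your second step. The paper does \emph{not} union-bound over the random pair $(s,L)$; instead it asserts that ``it suffices to only consider the integer intervals $[\erank{a_t}-k_\alpha:\erank{a_t}+k_\alpha]$,'' one fixed interval per value of $\alpha$, and bounds $\Pr\bigl[\,|\erank{a_t}-\pi(a_t)|>k_\alpha\,\bigr]$ by the probability that \emph{that one} interval is popular. Claim~\ref{clm:linear-probing} then gives $e^{-\Omega(\alpha)}$ with no union-bound overhead, and integrating over $\alpha$ yields $O(C)$ directly. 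Your route, by contrast, pays $\sum_{s\ge 2k+1}(s-2k)\,e^{-\Omega(s/C)}=\Theta(C^{2})\,e^{-\Omega(k/C)}$ in the tail, and summing gives only $O(C\log C)$, hence $O(n^{3/2}\log n)$ in Lemma~\ref{lem:assignment} rather than $O(n^{3/2})$. Neither of your proposed fixes closes this: absorbing the $(s-2k)$ degeneracy into the extra decay $e^{-\Omega((s-2k)/C)}$ is precisely what produces the $C^{2}$ prefactor, and sharpening $\Pr[X_I\ge s]$ to $\Pr[X_I=s]$ gains only a $1/\sqrt{s}$ factor, which does not remove the logarithm.

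One caveat you should be aware of: the paper's reduction to the centered interval is labeled ``straightforward'' but is not argued, and the literal implication ``$|\erank{a_t}-\pi(a_t)|>k_\alpha\Rightarrow[r-k_\alpha,r+k_\alpha]$ is popular'' is in fact false --- in a lopsided cluster the positions in $[r-k_\alpha,r+k_\alpha]$ can all be occupied by elements whose estimated ranks lie entirely outside that window. So the paper's proof has its own gap at exactly the point where it sidesteps your union bound. The lemma is nonetheless correct; a clean rigorous route is to exploit Proposition~\ref{prop:random-assignment} to average over $r=\erank{a_t}$ \emph{first}, which reduces $\Ex\card{\erank{a_t}-\pi(a_t)}$ to $\Theta(1/n)\cdot\Ex\bigl[\sum_{Q}|Q|^{2}\bigr]$ over maximal clusters $Q$ formed by $t-1$ i.i.d.\ uniform keys --- the classical expected-cluster-size quantity for linear probing at load $(t-1)/n$, which is $\Theta(C)$.
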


\begin{proof}
  Fix any $\alpha \geq 1$.  We claim that, if the learned rank of $a_{t}$ is a position which has distance at least $k_{\alpha} =  4\alpha \cdot \max\Paren{1,\Paren{\frac{t}{n-t}}^2}$ from its estimated rank, then necessarily there exists an
  interval $I$ of length at least $2k_{\alpha}$ which contains $\erank{a_t}$ and is popular. 
  
  Let us prove the above claim then. Let $I$ be the shortest integer interval $[a:b]$ which contains $\erank{a_t}$ and moreover both positions $a$ and $b$ are not assigned to a learned rank by time $t$ (by this definition, $\pi(a_t)$ would be either $a$ or $b$). 
  For $\card{\erank{a_t} - \pi(a_t)}$ to be at least $k_{\alpha}$, 
  the length of interval $I$ needs to be at least $2k_{\alpha}$. But for $I$ to have length at least $2k_{\alpha}$, we should have at least $2k$ elements from $a_1,\ldots,a_{t-1}$ to have an estimated rank in $I$: this is simply because $a$ and 
  $b$ are not yet assigned a rank by time $t$ and hence any element $a_{t'}$ which has estimated rank outside the interval $I$ is never assigned a learned rank inside $I$ (otherwise the assignment step should pick $a$ or $b$, a contradiction). 
  
  We are now ready to finalize the proof. It is straightforward that in the above argument, it suffices to only consider the integer intervals $[\erank{a_t} - k_{\alpha} : \erank{a_t}+k_{\alpha}]$ parametrized by the choice of $\alpha \geq 1$. 
  By the above argument and Claim~\ref{clm:linear-probing}, for any $\alpha \geq 1$, we have, 
  \begin{align*}
  \Ex\bracket{\card{\erank{a_{t}} - \pi(a_t)}} &\leq \int_{\alpha=0}^{\infty}    \Pr\Paren{\card{ \erank{a_{t}} - \pi(a_t)} >  k_{\alpha}} \cdot k_{\alpha} \cdot d\alpha \\
	&\leq \int_{\alpha=0}^{\infty} \Pr\Paren{\textnormal{Integer interval $[\erank{a_t} - k_{\alpha} : \erank{a_t}+k_{\alpha}]$ is popular}} \cdot k_{\alpha} \cdot d\alpha\\
  &\!\!\!\!\Leq{Claim~\ref{clm:linear-probing}} O(\max\Paren{1,\Paren{\frac{t}{n-t}}^2}) \cdot \int_{\alpha=0}^{\infty} e^{-O(\alpha)} \cdot \alpha \cdot d\alpha \\
  &=  O(\max\Paren{1,\Paren{\frac{t}{n-t}}^2}) . \qedhere
  \end{align*}
\end{proof}

We are now ready to finalize the proof of Lemma~\ref{lem:assignment}. 

\begin{proof}[Proof of Lemma~\ref{lem:assignment}]
	We have, $\Ex\bracket{\sum_{t=1}^{n}\card{\erank{a_{t}} - \pi(a_{t})}} = \sum_{t=1}^{n} \Ex\bracket{\card{\erank{a_{t}} - \pi(a_{t})}}$ by linearity of expectation.	For any $t < n/2$, the maximum term in RHS of Lemma~\ref{lem:cost} is $1$ and hence in this case, we have $\Ex\bracket{\card{\erank{a_{t}} - \pi(a_{t})}} = O(1)$. Thus, the contribution of the first $n/2-1$ terms to the above summation is only $O(n)$. Also, when $t > n-\sqrt{n}$, we can simply write $\Ex\bracket{\card{\erank{a_{t}} - \pi(a_{t})}} \leq n$ which is trivially true
	and hence the total contribution of these $\sqrt{n}$ summands is also $O(n\sqrt{n})$. It remains to bound the total contribution of $t \in [n/2,n-\sqrt{n}]$. By Lemma~\ref{lem:cost}, 
$$		\sum_{t=n/2}^{n-\sqrt{n}} \Ex\bracket{\card{\erank{a_{t}} - \pi(a_{t})}} \leq O(1) \cdot \sum_{t=n/2}^{n-\sqrt{n}} \paren{\frac{t}{n-t}}^2 = O(n\sqrt{n}),  $$
	where the equality is by a simple calculation (see Proposition~\ref{prop:sum} in Appendix~\ref{sec:appalg}). 
\end{proof}


\renewcommand{\alg}{\ensuremath{\mathcal{A}}}
\newcommand{\algstar}{\ensuremath{\mathcal{A}^*}}

\newcommand{\OR}{\ensuremath{\textnormal{\textsc{Ranking}}}\xspace}

\newcommand{\COR}{\ensuremath{\textnormal{\textsc{Congested-Ranking}}}\xspace}

\newcommand{\cost}[1]{\ensuremath{\textnormal{\textsf{cost}}_{\alg}(#1)}}

\newcommand{\FU}{\ensuremath{\mathcal{U}}}

\newcommand{\event}{\ensuremath{\mathcal{E}}}

\subsection{A Tight Lower Bound}
\label{sec:lower-bound}

We complement our algorithmic result in Theorem~\ref{thm:dense} by showing that
the cost incurred by our algorithm is asymptotically optimal.

\begin{theorem}\label{thm:lower-bound}
Any algorithm for the secretary ranking problem incurs ${\Omega(n\sqrt{n})}$ cost in expectation.
\end{theorem}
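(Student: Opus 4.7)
The plan is to translate the Kendall-tau lower bound into a mean-absolute-deviation problem and then invoke anti-concentration for a hypergeometric-type distribution. By the Diaconis-Graham inequality, $K(\pi) \geq F(\pi)/2$, so it suffices to prove
\[
\Ex\bracket{\sum_{t=1}^{n} \card{\rank{a_{t}} - \pi(a_{t})}} \;=\; \Omega(n^{3/2}).
\]
I will argue a per-step lower bound of $\Omega(n/\sqrt{t})$ (at least for $t \leq n/2$) and then sum via linearity of expectation.

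Fix a step $t$ and let $\mathcal{F}_t$ be the $\sigma$-algebra generated by the relative order of the first $t$ arrivals. Since $\pi(a_t)$ is committed irrevocably using only information in $\mathcal{F}_t$ (together with internal randomness, which only helps the lower bound),
\[
\Ex\bracket{\card{\rank{a_{t}} - \pi(a_{t})} \,\big|\, \mathcal{F}_t} \;\geq\; \min_{p} \Ex\bracket{\card{\rank{a_{t}} - p} \,\big|\, \mathcal{F}_t},
\]
with the right-hand side attained at the conditional median of $\rank{a_t}$. The conditional law of $\rank{a_t}$ given $\mathcal{F}_t$ is determined entirely by $r_t$: since $\{\rank{a_1}, \ldots, \rank{a_t}\}$ is a uniformly random size-$t$ subset of $[n]$ and the observed relative order identifies $a_t$ as its $(r_t+1)$-th smallest element, $\rank{a_t}$ is distributed as the $(r_t+1)$-th order statistic of a uniformly random size-$t$ subset of $[n]$. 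By Proposition~\ref{prop:random-assignment}, $r_t$ itself is uniform on $\{0, \ldots, t-1\}$, so with constant probability $r_t$ lies in the ``typical'' range $[t/4,\, 3t/4]$.

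The main technical step is an anti-concentration bound establishing that, when $r_t$ is typical, no single value is attained by this order statistic with probability more than $O(\sqrt{t}/n)$, which implies that the expected absolute deviation from any fixed point is $\Omega(n/\sqrt{t})$ (for $t \leq n/2$). This is where the main obstacle lies: unlike the classical anti-concentration inequalities for sums of independent random variables, here the $t$ samples are drawn without replacement and the quantity of interest is an order statistic, so the underlying variables are correlated. A dedicated anti-concentration argument for this correlated balls-in-bins sampling process is required, which is foreshadowed in the introduction as one of the main technical contributions of the paper.

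Once the per-step bound $\Ex\bracket{\card{\rank{a_{t}} - \pi(a_{t})}} = \Omega(n/\sqrt{t})$ is established, summing over $t \in [1, n/2]$ yields $\sum_{t=1}^{n/2} \Omega(n/\sqrt{t}) = \Omega(n^{3/2})$, and applying the DG inequality finishes the proof.
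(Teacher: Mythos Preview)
Your approach is essentially the paper's: reduce to Spearman's footrule via the Diaconis--Graham inequality, then lower-bound each per-step deviation by showing that the posterior of $\rank{a_t}$ given the observed relative order is spread out (anti-concentration for an order statistic of a uniform random subset), so no committed position can be close to $\rank{a_t}$ with high probability. The paper handles randomized algorithms via Yao's minimax rather than your direct conditional-median argument, but the two are equivalent here.

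The one substantive difference is your parameterization. You claim a per-step bound $\Ex\card{\rank{a_t}-\pi(a_t)} = \Omega(n/\sqrt{t})$ valid for all $t \le n/2$, which requires the pointwise anti-concentration $\max_r \Pr(\rank{a_t}=r \mid r_t) = O(\sqrt{t}/n)$ for typical $r_t$. The paper only states and proves the special case $t = \Theta_1(n)$ (Lemmas~\ref{lem:ballsandbin} and~\ref{lem:anti-concentration}), yielding the coarser per-step bound $\Omega(\sqrt{n})$ in Lemma~\ref{lem:lower-cost-t}, and then simply sums over the $\Theta(n)$ steps with $t = \Theta_1(n)$. Your more general anti-concentration is in fact true and follows from the same Stirling computation once the dependence on $c_t = t/n$ is tracked rather than absorbed into constants, but it buys nothing: $\sum_{t \le n/2} n/\sqrt{t}$ and $\sum_{t = \Theta_1(n)} \sqrt{n}$ are both $\Theta(n^{3/2})$. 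The paper's choice keeps the anti-concentration lemma cleaner at no loss.
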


To prove Theorem~\ref{thm:lower-bound}, we first show that no deterministic algorithm can achieve better than
$O(n \sqrt{n})$ inversions and then use Yao's minimax principle to extend the lower bound to randomized algorithms (by simply fixing the randomness of the algorithm to obtain a deterministic one with the same performance over the particular distribution of the input).

The main ingredient of our proof of Theorem~\ref{thm:lower-bound} is an anti-concentration bound for
sampling without replacement which we cast as a balls in bins problem. We start by describing this balls in bin problem and prove the anti-concentration
bound in Lemma~\ref{lem:ballsandbin}. Lemma~\ref{lem:anti-concentration} then
connects the problem of online ranking to the balls in bins problem. We
conclude with the proof of Theorem~\ref{thm:lower-bound}. 

To continue, we introduce some asymptotic notation that is helpful for readability. We write $v = \Theta_1(n)$  if variable $v$ is linear in $n$, but also smaller and bounded away from $n$, i.e., $v = cn$ for some  constant $c$ such that  $0 < c  < 1$.

\begin{lemma}
\label{lem:ballsandbin}
Assume there are $n$ balls in a bin, $r$ of which are red and the remaining $n -
  r$ are blue. Suppose $t < \min(r, n-r)$ balls are drawn from the bin uniformly at
  random without replacement, and let $\event_{k,t,r,n}$ be the event that $k$ out of
  those $t$ balls are red. Then, if $r = \Theta_1(n)$ and $t = \Theta_1(n)$, for
  every $k \in \{0, \hdots, t\}$:
  $\Pr\left(\event_{k,t,r,n}\right) =
  O\left(1/\sqrt{n}\right).$
\end{lemma}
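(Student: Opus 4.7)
The plan is to recognize $\Pr(\mathcal{E}_{k,t,r,n})$ as the hypergeometric probability
$$\Pr(\mathcal{E}_{k,t,r,n}) = \frac{\binom{r}{k}\binom{n-r}{t-k}}{\binom{n}{t}},$$
and to bound its maximum mass over $k$. It is well-known that this distribution is unimodal in $k$: the ratio $\Pr(\mathcal{E}_{k+1})/\Pr(\mathcal{E}_{k}) = (r-k)(t-k)/[(k+1)(n-r-t+k+1)]$ is strictly decreasing in $k$ and crosses $1$ exactly once, at the mode $k^\star = \Theta(rt/n)$. It therefore suffices to bound $\Pr(\mathcal{E}_{k^\star,t,r,n})$. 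Under the hypotheses $r,t = \Theta_1(n)$ and $t < \min(r,n-r)$, each of the nine quantities $r$, $n-r$, $t$, $n-t$, $k^\star$, $r-k^\star$, $t-k^\star$, $n-r-t+k^\star$, $n$ is itself $\Theta(n)$, which is the crucial structural fact that powers the estimate.

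Next I would apply Stirling's formula $m! = \sqrt{2\pi m}\,(m/e)^m\,(1+O(1/m))$ to each factorial in
$$\Pr(\mathcal{E}_{k^\star,t,r,n}) = \frac{r!\,(n-r)!\,t!\,(n-t)!}{k^\star!\,(r-k^\star)!\,(t-k^\star)!\,(n-r-t+k^\star)!\,n!}.$$
The $(\cdot/e)^{(\cdot)}$ pieces collect into an expression of the form $\exp(-n\,\Phi)$, where $\Phi$ is a sum of binary-entropy/Kullback--Leibler terms that vanishes exactly at the mode (because at $k=k^\star$ the observed "frequencies" equal their expectations up to additive $O(1/n)$). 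What survives is the ratio of $\sqrt{2\pi m}$ factors: four of them in the numerator and five in the denominator, each of order $\sqrt{n}$. The count gives a net factor of $\Theta(1/\sqrt{n})$, which is the bound we want.

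The main obstacle is making the cancellation of the exponential part rigorous, since Stirling has multiplicative error $1+O(1/n)$ in each of nine factors and one has to verify that the $\exp(-n\,\Phi)$ tail is actually $1+o(1)$ at $k^\star$. A cleaner route that bypasses the KL bookkeeping is as follows: use the ratio expression above to move from $k^\star$ to a slightly shifted integer, then telescope; or, simpler still, directly invoke a local central limit theorem for the hypergeometric distribution, which gives $\Pr(\mathcal{E}_{k,t,r,n}) \leq (1+o(1))/\sqrt{2\pi\sigma^2}$ uniformly in $k$, where $\sigma^2 = t \cdot (r/n) \cdot (1-r/n) \cdot (n-t)/(n-1) = \Theta(n)$ under the assumptions. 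Either way the bound $O(1/\sqrt{n})$ follows, completing the proof.
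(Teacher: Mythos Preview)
Your proposal is correct and follows essentially the same route as the paper: identify the hypergeometric mass, locate the mode at $k^\star \approx rt/n$ via the ratio $\Pr(\mathcal{E}_{k+1})/\Pr(\mathcal{E}_k)$, and then Stirling at the mode to extract the $\Theta(1/\sqrt{n})$ factor. The paper does exactly this, split into two sub-lemmas (one for the mode location, one for the Stirling estimate at $k=\lfloor rt/n\rfloor$).

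The only difference worth noting is how the ``exponential cancellation'' you flag as an obstacle is handled. Rather than tracking a KL term $\exp(-n\Phi)$ and arguing it is $1+o(1)$ at the mode, the paper first packages Stirling into a closed form for $\binom{n}{cn}$, namely $\binom{n}{cn}=\Theta\bigl(n^{-1/2}c^{-(cn+1/2)}(1-c)^{-((1-c)n+1/2)}\bigr)$, and then plugs this into $\binom{r}{k^\star}\binom{n-r}{t-k^\star}/\binom{n}{t}$ with $r=c_r n$, $t=c_t n$, $k^\star=c_rc_t n$. The point is that all the $c^{-cn}$-type factors cancel \emph{algebraically} in this parametrization, leaving only $\Theta\bigl((n\,c_t(1-c_t)c_r(1-c_r))^{-1/2}\bigr)$. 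So the cancellation you worried about is exact rather than asymptotic, and no local-CLT machinery is needed.
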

\noindent
Our high level approach toward proving Lemma~\ref{lem:ballsandbin} is as follows: 
\begin{enumerate}

\item We first use a counting argument to show that 
$\Pr\left(\event_{k,t,r,n}\right) = {{r}\choose{k}}  {{n-r}\choose{t-k}}/{{n}\choose{t}}.$

\item We then use Stirling's approximation to show 
${{r}\choose{k}} {{n-r}\choose{t-k}}/{{n}\choose{t}}  =O(n^{-1/2})$  for $k = \lfloor \frac{tr}{n} \rfloor$.

\item Finally, with a max. likelihood argument, we show that $\argmax_{k \in [n]} {{r}\choose{k}}  {{n-r}\choose{t-k}}/{{n}\choose{t}}  \approx   \frac{tr}{n}$. 

\end{enumerate}
\noindent
By combining these, we have, $\Pr\left(\event_{k,t,r,n}\right) \leq \max_{k \in [n]} {{r}\choose{k}}  {{n-r}\choose{t-k}}/{{n}\choose{t}} \leq {{r}\choose{k^*}}  {{n-r}\choose{t-k^*}}/{{n}\choose{t}}$ for $k^* \approx \frac{tr}{n}$ (by the third step), which
we bounded by $O(n^{-1/2})$ (in the second step). The actual proof is however rather technical and is postponed to Appendix~\ref{sec:applower-bound}.

The next lemma shows that upon arrival of $a_{t}$, any position has probability
at least $O\paren{1 /\sqrt{n}}$ of being the correct rank for
$a_{t}$, under some mild conditions.
The proof of this lemma uses the previous
anti-concentration bound for sampling without replacement by considering the
elements smaller than $a_{t}$ to be the red balls and the elements larger than
$a_{t}$ to be the blue balls. For $a_{t}$ to have rank $r$ and be the $k$th
element in the ranking so far, the first $t - 1$ elements previously observed
must contain $k-1$ red balls out of the $r-1$ red balls and $t - k$ blue balls out
of the $n - r $ blue balls.

\begin{lemma}\label{lem:anti-concentration}  Fix any permutation $\sigma$ of $[t]$ 
  and let the event $\O_\sigma$ denote the event that $a_{\sigma(1)} <
  a_{\sigma(2)} < \hdots < a_{\sigma(t)}$. If $\sigma(k) = t$, $k = \Theta_1(t)$
  and $t = \Theta_1(n)$ then for any $r$:
	$		\Pr\paren{\rank{a_{t}} = r \mid \O_{\sigma}} = {O}\paren{1/\sqrt{n}}. 
	$\end{lemma}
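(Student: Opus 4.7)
The plan is to reduce $\Pr(\rank{a_t} = r \mid \O_\sigma)$ to a hypergeometric probability and then apply Lemma~\ref{lem:ballsandbin}. Write $\rho_i := \rank{a_i}$, so that $(\rho_1,\ldots,\rho_n)$ is a uniformly random permutation of $[n]$; the conditioning event $\O_\sigma$ with $\sigma(k)=t$ asserts that $\rho_{\sigma(1)} < \cdots < \rho_{\sigma(k-1)} < \rho_t < \rho_{\sigma(k+1)} < \cdots < \rho_{\sigma(t)}$.

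First I would count permutations consistent with $\O_\sigma$ directly: choose the $t$-element subset of $[n]$ that occupies positions $1,\ldots,t$ (its internal order is then forced by $\sigma$) and permute the remaining $n-t$ values in positions $t+1,\ldots,n$, giving $\binom{n}{t}(n-t)!$ permutations. Restricting further to $\rho_t = r$ requires that the chosen subset contain exactly $k-1$ values from $\{1,\ldots,r-1\}$ and $t-k$ from $\{r+1,\ldots,n\}$, yielding $\binom{r-1}{k-1}\binom{n-r}{t-k}(n-t)!$ permutations. Taking the ratio,
\[
\Pr(\rank{a_t} = r \mid \O_\sigma) \;=\; \frac{\binom{r-1}{k-1}\binom{n-r}{t-k}}{\binom{n}{t}}.
\]

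Next I would relate the right-hand side to $\Pr(\event_{k,t,r,n}) = \binom{r}{k}\binom{n-r}{t-k}/\binom{n}{t}$. For $r \geq k$ the identity $\binom{r-1}{k-1} = (k/r)\binom{r}{k}$ gives
\[
\Pr(\rank{a_t} = r \mid \O_\sigma) \;=\; \frac{k}{r}\,\Pr(\event_{k,t,r,n}) \;\leq\; \Pr(\event_{k,t,r,n}),
\]
while for $r < k$ or $r > n-t+k$ the expression is identically zero and the target bound is trivial.

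It then remains to verify the hypotheses of Lemma~\ref{lem:ballsandbin} on the support $r \in [k, n-t+k]$. The assumptions $k=\Theta_1(t)$ and $t=\Theta_1(n)$ force $k=\Theta_1(n)$ and $t-k=\Theta_1(n)$, so every $r$ in the support satisfies $r \geq k = \Theta_1(n)$ and $n-r \geq t-k = \Theta_1(n)$; hence $r = \Theta_1(n)$ automatically. Lemma~\ref{lem:ballsandbin} then delivers $\Pr(\event_{k,t,r,n}) = O(1/\sqrt{n})$, finishing the argument. The step I expect to require the most care is setting up the two counting expressions correctly and identifying the resulting ratio with the hypergeometric PMF; once that is in place, the ``for any $r$'' quantifier becomes harmless because values outside the support contribute zero probability, and all the anti-concentration work has already been done in Lemma~\ref{lem:ballsandbin}.
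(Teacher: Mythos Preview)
Your proof is correct and follows essentially the same route as the paper: both reduce $\Pr(\rank{a_t}=r\mid\O_\sigma)$ to a hypergeometric probability and then invoke Lemma~\ref{lem:ballsandbin}. The only cosmetic differences are that the paper reaches the hypergeometric form via Bayes' rule (getting $\tfrac{t}{n}\cdot\Pr(\event_{k-1,t-1,r-1,n-1})$) rather than by your direct permutation count, and that you then pass to $\Pr(\event_{k,t,r,n})$ via $\binom{r-1}{k-1}\le\binom{r}{k}$; your explicit handling of the support $r\in[k,\,n-t+k]$ is in fact a bit more careful than the paper's.
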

	\begin{proof}
		Define $\event_k$ as the event that ``$a_{t}$ is the $k$-th smallest element in 
    $a_{1},\ldots,a_{t}$''. We first have, 
$			\Pr\paren{\rank{a_{t}} = r \mid \O_{\sigma}} = 
      \Pr\paren{\rank{a_{t}} = r \mid \event_k}. $    This is simply because $\rank{a_{t}}$ is only a function of the pairwise comparisons of $a_{t}$ with other elements and does not depend on the ordering of the remaining elements between themselves.
		Moreover, 	
		\begin{align*}
			\Pr\paren{\rank{a_{t}} = r \mid \event_k} &= \Pr\paren{\event_k  \mid \rank{a_{t}} = r} \cdot \frac{\Pr\paren{\rank{a_{t}} = r}}{\Pr\paren{\event_k}} 
      = \Pr\paren{\event_k \mid \rank{a_{t}} = r} \cdot \frac{t}{n}
    \end{align*}
		since $a_{t}$ is randomly partitioned across the $[n]$ elements.

Notice now that conditioned on $\rank{a_{t}} = r $, the event $\event_k$ is
exactly the event $\event_{k-1,t-1,r-1,n-1}$ in the sampling without replacement process
defined in Lemma \ref{lem:ballsandbin}. The $n-1$ balls are all the elements but $a_t$, the
$r-1$ red balls correspond to elements smaller than $a_{t}$, the $n-r$
blue balls to elements larger than $a_{t}$, and $t-1$ balls drawn are the elements arrived before $a_t$.

Finally, observe that $\Pr\paren{r < k \vert \event_k} = 0$, so for $r < k$,
the bound holds trivially. In the remaining cases, $r = \Theta_1(n)$ and we use the bound in Lemma \ref{lem:ballsandbin} with  $t/n = \Theta(1)$ to get the statement.
\end{proof}

Using the previous lemma, we can  lower bound the cost due to the $t$-th
element.  Fix any deterministic algorithm $\alg$ for the online ranking problem.
Recall that $\pi(a_{t})$ denotes the learned rank of the item $a_{t}$
arriving in the $t$-th time step.  For any time step $t \in [n]$, we use
$\cost{t}$ to denote the cost incurred by the algorithm $\alg$ in positioning
the item $a_{t}$. More formally, if $\rank{a_{t}} = i$, we have 
$\cost{t} := \card{i - \pi(a_{(t)})}$. The
following lemma is the key part of the proof. 

\begin{lemma}\label{lem:lower-cost-t}
  Fix any deterministic algorithm $\alg$. For any $t =\Theta_1(n)$, $\Ex\bracket{\cost{t}} = \Omega\paren{\sqrt{n}}$. 
\end{lemma}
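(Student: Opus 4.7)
The plan is to exploit Lemma~\ref{lem:anti-concentration}. Since $\alg$ is deterministic, the learned rank $\pi(a_t)$ is completely determined by the relative ordering $\sigma$ of the first $t$ arrivals; call this fixed integer $v_\sigma$. Lemma~\ref{lem:anti-concentration} says that, under mild conditions on $\sigma$, the posterior distribution of the true rank $\rank{a_t}$ given $\O_\sigma$ has no point mass larger than $O(1/\sqrt n)$. A ``spread out'' posterior is necessarily $\Omega(\sqrt n)$ far in expectation from \emph{any} deterministic target $v_\sigma$, which is precisely the bound we need.

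First, I would isolate a subset of realisations on which Lemma~\ref{lem:anti-concentration} applies. Fix constants $0 < c_1 < c_2 < 1$ and let $\mathcal{G}$ be the event that the relative rank of $a_t$ inside $\{a_1,\dots,a_t\}$, call it $k$, lies in $[c_1 t, c_2 t]$. By symmetry of the uniformly random arrival order, $k$ is uniform in $[t]$, so $\Pr(\mathcal{G}) \geq c_2 - c_1 = \Omega(1)$. Every $\sigma$ consistent with $\mathcal{G}$ has $\sigma(k) = t$ with $k = \Theta_1(t)$; combined with $t = \Theta_1(n)$ from the hypothesis, Lemma~\ref{lem:anti-concentration} yields a constant $C$ such that $\Pr\paren{\rank{a_t} = r \mid \O_\sigma} \leq C/\sqrt n$ for every $r$.

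Second, I would turn this per-point bound into an expected-distance bound. Set $\ell := \lfloor \sqrt n /(4C) \rfloor$; summing the per-point bound over the $2\ell+1$ integers in $[v_\sigma - \ell, v_\sigma + \ell]$ gives $\Pr(|\rank{a_t} - v_\sigma| \leq \ell \mid \O_\sigma) \leq (2\ell+1) C/\sqrt n \leq 3/4$ for $n$ large enough. Therefore $\Pr(|\rank{a_t} - v_\sigma| > \ell \mid \O_\sigma) \geq 1/4$, which gives $\Ex[\cost{t} \mid \O_\sigma] \geq \tfrac{1}{4}\ell = \Omega(\sqrt n)$ for every such $\sigma$.

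Averaging over $\sigma \in \mathcal{G}$ yields $\Ex[\cost{t}] \geq \Pr(\mathcal{G}) \cdot \Omega(\sqrt n) = \Omega(\sqrt n)$, as required. All the heavy lifting is done by Lemma~\ref{lem:anti-concentration}; the remaining content is the standard observation that a distribution with all point masses $O(1/\sqrt n)$ is $\Omega(\sqrt n)$ far in expectation from every deterministic guess. The only thing to be careful about is that the conditioning event $\mathcal{G}$ retain constant probability, which is immediate from the random arrival assumption, so no real obstacle remains beyond Lemma~\ref{lem:anti-concentration} itself.
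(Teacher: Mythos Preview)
Your proposal is correct and follows essentially the same approach as the paper: condition on the relative rank $k$ of $a_t$ falling in a middle interval (the paper uses $[t/4,3t/4]$, you use $[c_1 t,c_2 t]$), apply Lemma~\ref{lem:anti-concentration} to bound every point mass of $\rank{a_t}$ by $O(1/\sqrt n)$, and conclude that any deterministic guess $v_\sigma$ is at distance $\Omega(\sqrt n)$ with constant probability. The only difference is presentational---you spell out the $\ell=\lfloor\sqrt n/(4C)\rfloor$ window and sum the point masses explicitly, whereas the paper simply asserts the existence of the constant $c$---but the logic is identical.
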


\begin{proof}
  Let $\sigma$ be a permutation of $[t]$ and $\O_\sigma$ the event that
  $a_{\sigma(1)} < a_{\sigma(2)} < \hdots < a_{\sigma(t)}$. For any
  deterministic algorithm $\A$, the choice of the position $\pi(a_t)$ where to
  place the $t$-th element depends only on $\sigma$. Let $k = \sigma^{-1}(t)$ be
  the relative rank of the $t$-th element. Since the  distribution of $k$ is
  uniform in $[t]$ (see the proof of Proposition \ref{prop:random-assignment}),
  then we have that:
  $$\Pr\left[\frac{t}{4} \leq k \leq \frac{3t}{4}\right] =
  \frac{1}{2}$$
  Conditioned on that event $k = \Theta_1(t)$ so we are in the conditions of Lemma
  \ref{lem:anti-concentration}. Therefore, the probability of each rank given
  the observations is at most $O(1/\sqrt{n})$. Therefore, there is a constant
  $c$ such that:
  $$\Pr\left[ \card{\rank{a_{(t)}} - \pi(a_{(t)}) } < c \sqrt{n} \text{ }
  \middle\vert \text{ } 
   \frac{t}{4} \leq k \leq \frac{3t}{4} \right] \leq \frac{1}{2}$$

   Finally, we observe that:
  \begin{align*}
    \Ex\bracket{\cost{t}} & \geq \frac{1}{2}\cdot \Ex\left[\card{\rank{a_{(t)}} -
  \pi(a_{(t)})}  \text{ } \middle\vert \text{ }  \frac{t}{4} \leq k \leq
    \frac{3t}{4} \right]  \\ & \geq \frac{1}{2} \cdot c \sqrt{n} \cdot \Pr\left[
    \card{\rank{a_{(t)}} - \pi(a_{(t)})} \geq c \sqrt{n}  
    \text{ } \middle\vert \text{ }   \frac{t}{4}
    \leq k \leq \frac{3t}{4} \right]  \geq \frac{c \sqrt{n}}{4}.
  \end{align*}

	\end{proof}

We are now ready to prove Theorem~\ref{thm:lower-bound}.

\begin{proof}[Proof of Theorem~\ref{thm:lower-bound}]
For any deterministic algorithm, sum the bound in Lemma \ref{lem:lower-cost-t}
for $\Theta(n)$ time steps. For randomized algorithms, the same bound extends via
  Yao's minimax principle. The reason is that
a randomized algorithm can be seen as a distribution on deterministic
algorithms parametrized by the random bits it uses. If a randomized
algorithm obtains less than $O(n \sqrt{n})$ inversions in expectation,
then it should be possible to fix the random bits and obtain a deterministic
algorithm with the same performance.
\end{proof}

\section{Sparse Secretary Ranking}
\label{sec:sparse}

Now we consider the special case where the number of positions is very large,
which we call sparse secretary ranking. In the extreme when $m \geq 2^{n+1}-1$ it
is possible to assign a position to each secretary without ever incurring a
mistake. To do that, build a complete binary tree of height $n$ and associate
each position in $[m]$ with a node (both internal and leaf) of the binary
tree such that the order of the positions corresponds to the pre-order
induced by the binary tree
(see figure \ref{fig:binary_tree}). Once the elements arrive in online fashion, insert them in
the binary tree and allocate them in the corresponding position.

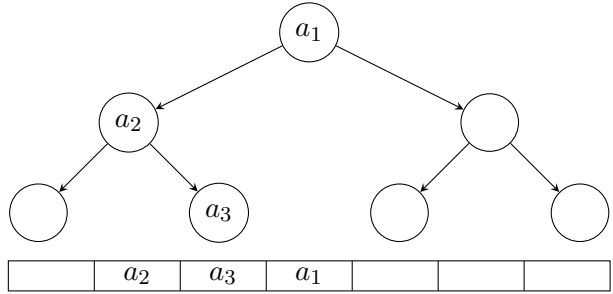
\begin{figure}[h]
\centering
\begin{tikzpicture}[
level 1/.style={sibling distance=60mm},
level 2/.style={sibling distance=30mm},
level 3/.style={sibling distance=15mm},
level 4/.style={sibling distance=7.5mm}, scale=.8
]
\node[c] {$a_1$}
    child{ node[c]  {$a_2$} edge from parent[onearrow]
            child{ node[c] {} }
            child{ node [c] {$a_3$} edge from parent[onearrow] }
    }
    child{ node[c] {} edge from parent[onearrow]
            child{ node [c] {} edge from parent[onearrow] }
            child{ node[c] {} edge from parent[onearrow]}
    };
\begin{scope}[shift={(0,-.3)}]
\draw (-5,-3.5) -- (5,-3.5) -- (5,-4) -- (-5,-4) -- cycle;

\draw (-5+10/7.0,-3.5) -- (-5+10/7.0,-4);
\draw (-5+2*10/7.0,-3.5) -- (-5+2*10/7.0,-4);
\draw (-5+3*10/7.0,-3.5) -- (-5+3*10/7.0,-4);
\draw (-5+4*10/7.0,-3.5) -- (-5+4*10/7.0,-4);
\draw (-5+5*10/7.0,-3.5) -- (-5+5*10/7.0,-4);
\draw (-5+6*10/7.0,-3.5) -- (-5+6*10/7.0,-4);
\node at (-5+3.5*10/7.0,-3.75) {$a_1$};
\node at (-5+1.5*10/7.0,-3.75) {$a_2$};
\node at (-5+2.5*10/7.0,-3.75) {$a_3$};
\end{scope}
\end{tikzpicture}
\label{fig:binary_tree}
\caption{Illustration of the binary tree algorithm for $m = 7$ and order $a_2 <
a_3 < a_1$.}
\end{figure}

We note that the above algorithm works for any order of arrival. If the elements
arrive in random order, it is possible to obtain zero inversions with high
probability for an exponentially smaller value of $m$. The idea is very similar
to the one oulined above. Let $H_n$ be a random variable corresponding to the
height of a binary tree built from $n$ elements in random order. Reed
\cite{reed2003height} shows that $\Ex[H_n] =
\alpha \ln(n)$, $\text{Var}[H_n] = O(1)$ where $\alpha$ is
the solution of the equation $\alpha \ln(2e / \alpha) = 1$ which is
$\alpha \approx 4.31107$.

Since the arrival order of secretaries is uniformly random, the binary tree
algorithm won't touch any node with height more than $\bar
h = \lceil  (\alpha + O(\epsilon)) \ln(n) \rceil$ with probability $1-o(1)$.
This observation allows us to define an algorithm that obtains zero
inversions with probability $1-o(1)$. If $m \geq 2^{\bar h + 1}-1 =
\Omega(n^{2.998 + \epsilon})$, we can build a binary tree with height $\bar h$
and associate each node of the
tree to a position. Once the elements arrive, allocate the item in the
corresponding position. If an item is added to the tree with height larger than
$\bar h$, start allocating the items arbitrarly.

\begin{theorem}\label{thm:sparse}
If $m \geq n^{2.988 + \epsilon}$ then the algorithm that allocates
  according to a binary tree incurs zero inversions with probability $1-o(1)$.
\end{theorem}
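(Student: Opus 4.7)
The plan is to implement the algorithm via a standard (random) binary search tree on the $n$ arriving elements, embedded into $m$ positions so that an in-order traversal of the BST agrees with the natural order of positions. Concretely, fix a small $\epsilon' > 0$, set $\bar h = \lceil (\alpha + \epsilon')\ln n \rceil$ with $\alpha \approx 4.311$ being Reed's constant, and build a complete binary tree $T$ of height $\bar h$ with $N = 2^{\bar h+1} - 1$ nodes. Since $N \leq 2 \cdot n^{(\alpha+\epsilon')\ln 2} \leq 2 \cdot n^{2.988 + \epsilon'\ln 2}$, choosing $\epsilon'$ small enough relative to $\epsilon$ in the hypothesis yields $N \leq m$. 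Identify each node of $T$ with its in-order position (the labeling depicted in the figure); when $a_t$ arrives, use pairwise comparisons against $a_1,\ldots,a_{t-1}$ to perform the standard BST insertion, walking from the root and going left or right at each node based on the comparison with the element previously placed there. If the resulting empty slot has depth at most $\bar h$, assign $a_t$ to its corresponding position; otherwise assign any remaining free position arbitrarily.

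Next I would verify correctness assuming the BST built on $\{a_1,\ldots,a_n\}$ has height at most $\bar h$. In that event every element lands in its true BST node, and the defining property of in-order traversal is that whenever $a_i < a_j$, the in-order index of the node holding $a_i$ is strictly smaller than the in-order index of the node holding $a_j$. Hence $\pi(a_i) < \pi(a_j)$ for every such pair, and the algorithm incurs zero inversions on this event.

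It therefore remains to bound the probability that the BST has height exceeding $\bar h$. Because the arrival order of the elements is a uniformly random permutation of the total order, the constructed tree is distributed exactly as a uniformly random BST on $n$ keys, to which Reed's bounds $\Ex[H_n] = \alpha \ln n$ and $\var{H_n} = O(1)$ apply directly. Chebyshev's inequality then gives
$$\Pr\paren{H_n > \bar h} \;\leq\; \Pr\paren{H_n - \Ex[H_n] > \tfrac{\epsilon'}{2}\ln n} \;\leq\; \frac{O(1)}{(\epsilon' \ln n / 2)^2} \;=\; o(1),$$
which is the entire source of the $1-o(1)$ failure probability. The main obstacle is minor bookkeeping: reconciling the pre-order/in-order convention of the embedding, and matching $\alpha \ln 2 \approx 2.988$ together with the slack $\epsilon'$ to the hypothesis exponent $2.988 + \epsilon$ on $m$. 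Everything else, namely that BST insertion is realizable from past pairwise comparisons and that in-order gives the correct positional order, is standard.
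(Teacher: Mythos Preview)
Your proposal is correct and follows essentially the same approach as the paper: build a complete binary tree of height $\bar h = \lceil (\alpha+\epsilon')\ln n\rceil$, embed it into $[m]$ via an in-order labeling, insert elements by standard BST insertion, and use Reed's result that $\Ex[H_n]=\alpha\ln n$ with $\var{H_n}=O(1)$ (together with Chebyshev, which the paper leaves implicit) to conclude the tree fits with probability $1-o(1)$. Your remark about the in-order convention is apt---the paper's text says ``pre-order'' but the figure and the argument both require in-order.
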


Devroye \cite{devroye1986note} bounds the tail of the distribution of $H_n$ as
follows:
$$\Pr[H_n \geq k \cdot \ln n] \leq \frac{1}{n} \cdot \left( \frac{2e}{k}
\right)^{k\cdot  \ln n}$$
for $k > 2$. In particular: $\Pr[H_n \geq 6.3619 \cdot \ln n] \leq 1/n^2$.
Adapting the analysis above, we can show that for $m \geq 4.41$ (where
$4.41 = 6.3619 \cdot \ln(2)$) the algorithm incurs less than one inversion
in expectation.

\begin{corollary}
  If $m \geq \Omega(n^{4.41})$ then the algorithm that allocates according to a
  binary tree incurs $O(1)$ inversion in expectation.
\end{corollary}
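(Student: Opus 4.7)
The plan is to follow the same template as Theorem~\ref{thm:sparse}, but use the sharper tail bound of Devroye to push the failure probability all the way down to $1/n^2$, which is small enough that even a trivial $O(n^2)$ worst-case bound on inversions gives only constant expected loss. First, I would set the target height $\bar h = \lceil 6.3619 \cdot \ln n \rceil$ so that Devroye's inequality (with $k = 6.3619 > 2$) yields
\[
  \Pr[H_n \geq \bar h] \;\leq\; \frac{1}{n} \cdot \Paren{\frac{2e}{6.3619}}^{6.3619 \ln n} \;\leq\; \frac{1}{n^2},
\]
using that $6.3619 > 2e \cdot e^{1/6.3619}$ (or more carefully, that the exponent $6.3619 \cdot \ln(6.3619/(2e))$ exceeds $1$, so the factor contributes another $1/n$).

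Next, I would check that the hypothesis $m \geq \Omega(n^{4.41})$ is precisely what is needed to embed a complete binary tree of height $\bar h$ into $[m]$: a complete binary tree of height $\bar h$ has $2^{\bar h + 1} - 1 = O(2^{6.3619 \ln n}) = O(n^{6.3619 \ln 2}) = O(n^{4.41})$ nodes, which is $\leq m$ by assumption. Hence one can associate each position of $[m]$ with a node such that positional order matches the pre-order traversal, exactly as in the construction preceding Theorem~\ref{thm:sparse}.

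Now define the event $\mathcal{G} = \{H_n < \bar h\}$ that the random binary tree built from the $n$ elements in their (uniformly random) arrival order has height below $\bar h$. On $\mathcal{G}$ the binary-tree algorithm places every element at a node of the pre-allocated complete tree of height $\bar h$, and by construction of the pre-order embedding, the learned ranking is consistent with the true total order, so it incurs zero inversions. On the complementary event $\mathcal{G}^c$, we make no attempt at cleverness: the total number of inversions is trivially at most $\binom{n}{2} \leq n^2$. Combining the two cases,
\[
  \Ex[K(\pi)] \;\leq\; \Pr[\mathcal{G}] \cdot 0 \;+\; \Pr[\mathcal{G}^c] \cdot n^2 \;\leq\; \frac{1}{n^2} \cdot n^2 \;=\; O(1),
\]
which is the desired bound.

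The only genuine subtlety, and the step I would double-check carefully, is the numerical verification that $k = 6.3619$ actually makes Devroye's bound collapse to $1/n^2$ (i.e., that the cited bound $\Pr[H_n \geq 6.3619 \ln n] \leq 1/n^2$ really holds and is not off by a logarithmic factor in the exponent) and that $2^{6.3619 \ln 2} \leq n^{4.41}$ matches the statement's exponent; both are routine but need to be spelled out. Everything else—the binary-tree embedding, the conditioning, and the union-style expectation bound—is identical in spirit to Theorem~\ref{thm:sparse} and requires no new ideas.
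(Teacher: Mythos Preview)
Your proposal is correct and follows essentially the same argument as the paper: use Devroye's tail bound with $k=6.3619$ to get failure probability at most $1/n^2$, observe that $2^{6.3619\ln n}=n^{6.3619\ln 2}\approx n^{4.41}$ so the tree fits in $[m]$, and then pay the trivial $n^2$ cost on the failure event to obtain an $O(1)$ expected number of inversions. (There is a small typo near the end of your write-up---``$2^{6.3619 \ln 2}$'' should read ``$2^{6.3619 \ln n}$''---but the earlier line already has the computation right.)
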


\section{General Secretary Ranking}

In the general case, we combine the ideas for the sparse and dense case to
obtain an algorithm interpolating both cases. As described in
Algorithm~\ref{alg:main}, we construct a complete binary search tree of height
$h$ and associating one position
for each internal node, but for the leaves we associate a block of $w = m/2^h -
1$ positions (see Figure \ref{fig:binary_tree_2}). If we insert an element
in a leaf, we allocate according to an instance of the dense ranking algorithm.
By that we mean that the agorithm pretends that the elements allocated to that
leaf are an isolated instance of dense ranking with $w$ elements and $w$
positions. We will set $h$ such that in expectation there only $w$ elements in
each leaf with high probability. If at some point more than $w$ elements are
placed in any given leaf, the algorithm starts allocating arbitrarly.

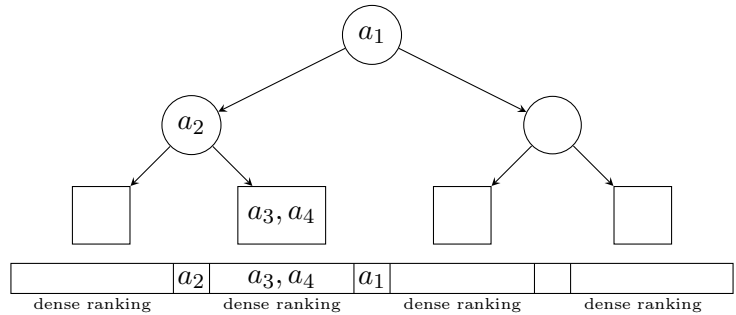
\begin{figure}[h]
\centering
\begin{tikzpicture}[
level 1/.style={sibling distance=60mm},
level 2/.style={sibling distance=30mm},
level 3/.style={sibling distance=15mm},
level 4/.style={sibling distance=7.5mm}, scale=.8
]
\node[c] {$a_1$}
    child{ node[c]  {$a_2$} edge from parent[onearrow]
            child{ node[r] {} }
            child{ node [r] {$a_3,a_4$} edge from parent[onearrow] }
    }
    child{ node[c] {} edge from parent[onearrow]
            child{ node [r] {} edge from parent[onearrow] }
            child{ node[r] {} edge from parent[onearrow]}
    };
\begin{scope}[shift={(0,-.3)}]
\draw (-6,-3.5) -- (6,-3.5) -- (6,-4) -- (-6,-4) -- cycle;

\draw (-6+12/4.0-.3,-3.5) -- (-6+12/4-.3,-4);
\draw (-6+2*12/4-.3,-3.5) -- (-6+2*12/4-.3,-4);
\draw (-6+3*12/4-.3,-3.5) -- (-6+3*12/4-.3,-4);
\draw (-6+12/4.0+.3,-3.5) -- (-6+12/4+.3,-4);
\draw (-6+2*12/4+.3,-3.5) -- (-6+2*12/4+.3,-4);
\draw (-6+3*12/4+.3,-3.5) -- (-6+3*12/4+.3,-4);

\node at (-6+2.7/2.0,-4.2) {\tiny{dense ranking}};
\node at (-6+3+.15+2.7/2.0,-4.2) {\tiny{dense ranking}};
\node at (-6+6+.15+2.7/2.0,-4.2) {\tiny{dense ranking}};
\node at (-6+9+.15+2.7/2.0,-4.2) {\tiny{dense ranking}};

\node at (-6+2*12/4,-3.77  ) {$a_1$};
\node at (-6+1*12/4,-3.77  ) {$a_2$};
\node at (-6+1.5*12/4,-3.77  ) {$a_3,a_4$};

\end{scope}
\end{tikzpicture}
\caption{Illustration of the general algorithm (Algorithm~\ref{alg:main}) for order $a_2 <
a_3 < a_4 < a_1$. The leaves are associated with blocks of $w$ consecutive
positions and internal nodes are associated with a single position. Elements
  $a_3$ and $a_4$ are associated with the same leaf and therefore we place them
  in a block of $w$ positions as we would in a dense ranking problem with $w$
  arrivals and $w$ positions.}
  \label{fig:binary_tree_2}
\end{figure}

\medskip

\begin{algorithm2e}[H]

\caption{General secretary ranking}\label{alg:main}

{ \textbf{Input:} a set of $m$ positions, at most $n$ online arrivals and a height $h$.}

	Construct
a complete binary search tree $T$ of height $h$ and associate one position
for each internal node, and a block of $w = m/2^h -
1$ positions for each leaf such that the order of the positions corresponds to the pre-order
induced by the binary tree

\For{\textnormal{any time step $t \in [n]$ and element $a_{t}$}}{	

Insert $a_t$ in the tree $T$

	\If{\textnormal{$a_t$ reaches an empty internal node}}{

Place $a_t$ in the position corresponding to this internal node
}
\Else{
Place $a_t$ according to an instance of the dense ranking algorithm (Algorithm~\ref{alg:dense}) over the block of positions corresponding to the leaf reached by $a_t$. If there are no position available in that block, place $a_t$ arbitrarily
}
}
\end{algorithm2e}

\medskip

For stating our main theorem and its proof, it is convenient to define the
functions:
$$f(\alpha) = \frac{\alpha \ln(2) - 1}{1 - 2 \alpha \ln(2e / \alpha)} \qquad
g(\alpha) = \frac{1}{1 - 2 \alpha \ln(2e / \alpha)}$$
defined in the interval $(\alpha_0, \infty)$ where $\alpha_0 \approx 4.910$ is
the solution to the equation $1- 2 \alpha_0 \ln(2e / \alpha_0) = 0$. Both 
functions are monotone decreasing from $+\infty$ (when $\alpha = \alpha_0)$ to
zero (when $\alpha \rightarrow \infty$).
We are now ready to state our main theorem:

\begin{theorem}
  Assume $m \geq 10 n \log n$ and let $\alpha \in (\alpha_0, \infty)$ be
  the solution to $\frac{m}{9 n \log n} =
  n^{f(\alpha)}$, then the expected number of inversions of the general
  secretary ranking algorithm with $h = \alpha \ln(n^{g(\alpha)})$ is $\tilde{O}(n^{1.5 - 0.5 g(\alpha)})$.
\end{theorem}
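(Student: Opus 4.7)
The plan is to decompose the inversions produced by Algorithm~\ref{alg:main} into within-block contributions and to argue that, on the event no leaf overflows, elements placed in distinct leaf-blocks (or one in a block and one at an internal node) are always ordered correctly. Since the $m$ positions are numbered in pre-order along the tree and BST insertion routes smaller-ranked elements to the left of their ancestor's position, every element sits in a position inside the rank-range carved out by its ancestors. Consequently, on this good event, the total number of inversions is exactly $\sum_v K_v$, where $K_v$ counts the inversions produced by the dense-ranking instance inside leaf $v$'s block.

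Inside each leaf the algorithm runs Algorithm~\ref{alg:dense} on $N_v \leq w$ arrivals, so by Theorem~\ref{thm:dense} we have $\Ex[K_v] = O(N_v^{3/2})$. Combining with $\sum_v N_v \leq n$ and $\max_v N_v \leq w$, one obtains $\sum_v N_v^{3/2} \leq \sqrt{w}\cdot n$. The arithmetic is then straightforward: using the identity $f(\alpha) = g(\alpha)(\alpha\ln 2 - 1)$ together with $2^h = n^{\alpha g(\alpha)\ln 2}$ and $m = 9n\log n \cdot n^{f(\alpha)}$, one gets $w = m/2^h - 1 = \Theta(\log n \cdot n^{1 - g(\alpha)})$, so that $n\sqrt{w} = \tilde{O}(n^{3/2 - g(\alpha)/2})$, matching the target.

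The main obstacle is bounding the overflow event $\mathcal{F} := \{\max_v N_v > w\}$, on which the algorithm may resort to arbitrary placements that can cost up to $\Theta(n^2)$ inversions. To handle this I would couple the algorithm's truncated tree with the full (unbounded) random BST built from the $n$ random arrivals and apply Devroye's tail bound $\Pr[H_k \geq c\ln k] \leq (2e/c)^{c\ln k}/k$: a subtree at depth $h$ of the full BST containing more than $w$ elements must itself have height at least $\log_2(w+1)$, forcing $H_n \geq h + \log_2(w+1)$. The calibration $h = \alpha g(\alpha)\ln n$ together with the implicit equation $m/(9n\log n) = n^{f(\alpha)}$ defining $\alpha$ is arranged precisely so that Devroye's tail, after a union bound over the $2^h$ leaves, yields $\Pr[\mathcal{F}] = n^{-\Omega(1)}$, small enough to absorb the $\Theta(n^2)$ penalty.

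Putting the two contributions together, the expected cost is at most $O(n\sqrt{w}) + \Pr[\mathcal{F}]\cdot O(n^2) = \tilde{O}(n^{3/2 - g(\alpha)/2})$, which is the claimed bound. The delicate step of the whole argument is therefore the overflow tail calculation: the monotonicity of $g$ on $(\alpha_0,\infty)$ and the vanishing of $1 - 2\alpha\ln(2e/\alpha)$ at $\alpha_0$ are exactly what guarantee, for every admissible $m \geq 10n\log n$, that the prescribed $\alpha$ simultaneously makes $\Pr[\mathcal{F}]$ negligible and brings $n\sqrt{w}$ down to $\tilde{O}(n^{3/2 - g(\alpha)/2})$, interpolating smoothly between the dense bound of Theorem~\ref{thm:dense} and the sparse bound of Theorem~\ref{thm:sparse}.
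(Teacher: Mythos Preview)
Your decomposition and the within-block cost calculation are correct and match the paper. The gap is in the overflow bound.

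Your argument is that $N_v > w$ forces $H_n \geq h + \log_2(w+1)$, and then Devroye's tail on the full $n$-element tree should finish it. But plug in the numbers: with $h = \alpha g(\alpha)\ln n$ and $\log_2(w+1) \approx (1-g(\alpha))\ln n/\ln 2$, you get $h + \log_2(w+1) = c\ln n$ where $c = \alpha g(\alpha) + (1-g(\alpha))/\ln 2$. As $\alpha \to \infty$ (that is, as $m$ approaches $\Theta(n\log n)$) we have $g(\alpha)\to 0$ and hence $c \to 1/\ln 2 \approx 1.44$, well below the threshold $\approx 4.31$ where Devroye's bound $\Pr[H_n \geq c\ln n] \leq n^{c\ln(2e/c)-1}$ becomes nontrivial. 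For a wide range of $\alpha$ (already at $\alpha = 6$ one gets $c \approx 3.5$) the bound is vacuous and cannot absorb the $n^2$ penalty. The ``union bound over the $2^h$ leaves'' you mention is not needed for the height argument and does not rescue it.

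The paper sidesteps this with a \emph{two-phase} argument. It applies Devroye's tail not to the full tree on $n$ elements but to the BST on the first $t := n^{g(\alpha)}$ arrivals, obtaining $\Pr[H_t \geq \alpha\ln t] \leq t^{\alpha\ln(2e/\alpha)-1}$; here the relevant parameter is $\alpha$ itself, and since $\alpha > \alpha_0 > 4.31$ the bound is always useful. The definition of $g$ is calibrated so that $n^2 \cdot t^{\alpha\ln(2e/\alpha)-1} = n^{1.5 - 0.5 g(\alpha)}$ exactly. On the complementary event all $t$ first arrivals sit at internal nodes; a symmetry argument then shows each later element lands in any fixed gap $(b_j,b_{j+1})$ with probability $1/(t+1)$, and a Chernoff bound gives that no leaf receives more than $9(n/t)\log n$ elements except with probability $n^{-3}$. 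The equation $m/(9n\log n) = n^{f(\alpha)}$ is precisely what makes $w = m/2^h - 1 \geq 9(n/t)\log n$. In short, the overflow control comes from Devroye on $t$ elements plus Chernoff on the remaining $n-t$, not from a single height bound on the $n$-element tree.
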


We note that the algorithm smoothly interpolated between the two cases
previously analyzed. When $m = n \log(n)$ then $\alpha \rightarrow \infty$, so
$g(\alpha) \rightarrow 0$ and the bound on the theorem becomes
$\tilde{O}(n^{1.5})$. In the other extreme, when $m \rightarrow \infty$, then
$\alpha \rightarrow \alpha_0$ and therefore $g(\alpha) \rightarrow \infty$, so
the bound on the number of inversions becomes $O(1)$.

\medskip

\begin{proof}
Let $H_t$ be the height of the binary tree formed
by the first $t$ elements. By Devroye's bound \cite{devroye1986note}, the
probability that a random binary tree formed by the first $t := n^{g(\alpha)}$ elements has height
more than $h = \alpha \ln(t)$ is $$\Pr[H_t \geq h] \leq \frac{1}{t} 
(2e/\alpha)^{\alpha \ln t} = t^{\alpha \ln(2e/\alpha)-1}.$$ In case this event
happens, we will use the trivial bound of $O(n^2)$ on the number of inversions,
which will contribute $$n^2 t^{\alpha \ln(2e/\alpha)-1} = n^{1.5 - 0.5/(1 - 2
\alpha \ln(2e / \alpha))}= n^{1.5-0.5 g(\alpha)}$$ to the expectation. From this point on, we consider
the remaining event that $H_t < h$.

Next, we condition on  the first $t$ elements that we denote $b_1, \ldots, b_t$
such that $b_1 < \cdots < b_t$. We note that for each remaining  element $a_i$,
$i > t$, we have $b_j < a_i < b_{j+1}$ with probability $1/(t+1)$ for all $j \in
[t]$. Since $b_1, \ldots, b_t$ are all placed in positions corresponding to
internal nodes, each element has at most  probability $1/t$  of hitting any of
the dense-ranking instances.  Thus, each dense ranking instance receives at most
$n/t$ elements in expectation, and  by a standard application of the Chernoff
bound, the probability that a dense ranking instance sees more than $
9 (n/t) \log n$ elements is $ n^{-3}$. If this is the case
for some dense ranking instance, we again use the $n^2$ trivial bound, which
contributes at most $1$ to the expected number of inversions. For the remainder
of the proof, we assume that each dense ranking instance gets at most $
9(n/t)\log n$ elements.

Next, note that the size of each block is $$w  = \frac{m}{2^{h}} -1 =
\frac{m}{t^{\alpha \ln(2)}} -1  \geq 9 (n/t) \log n$$
where the last equality is by definition of $t$. Thus, no more than $w$ elements are inserted in any leaf.

Let $v_i$ is the number of elements in each of the dense rank instances. 
We note that within the elements in each dense ranking block the arrival order
is random, so we can apply the bound from Section~\ref{sec:alg} and obtain by 
Theorem~\ref{thm:dense} that the
total expected cost from the inversions caused by dense rank is at most $$\sum_i
O\left(v_i^{1.5}\right) \leq \tilde{O}(t \cdot (n/t)^{1.5}) = 
\tilde{O}(n^2 t^{\alpha \ln(2e/\alpha)-1}) = \tilde{O}(n^{1.5-0.5 g(\alpha)}) 
$$ since $\sum_i v_i = n$ and $v_i
\leq (n/t) \log(n)$. 
By the construction there are no inversions between elements
inserted in different leaves and between an element inserted in an internal node
and any other element. Summing the expected number of mistakes from the events $H_t \geq h$ and $H_t
< h$,  we get the bound in the statement of the theorem.
\end{proof}

\subsection*{Acknowledgements} 
We thank Sanjeev Khanna, Robert Kleinberg, and Vahab Mirrokni for helpful comments. 

\bibliographystyle{abbrv}
\bibliography{general}

\appendix


\section{Useful Concentration of Measure Inequalities}
\label{sec:appconcentration}

We use the following two standard versions of Chernoff bound (see,
e.g.,~\cite{ConcentrationBook}) throughout.

\begin{proposition}[Multiplicative Chernoff bound] \label{prop:chernoff-multi}
Let $X_1,\ldots,X_n$ be $n$ independent random variables taking values in $[0,1]$ and let $X:= \sum_{i=1}^{n} X_i$. Then, for any $\eps \in (0,1]$,
\begin{align*}
	\Pr\paren{X \geq (1+\eps) \cdot \Ex\bracket{X}} \leq \exp\paren{-{2\eps^2 \cdot \Ex\bracket{X}}}.
\end{align*}
\end{proposition}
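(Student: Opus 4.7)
The plan is the classical Chernoff--Hoeffding exponential moment argument. Set $\mu := \Ex[X]$. For any $\lambda > 0$, Markov's inequality applied to the nonnegative random variable $e^{\lambda X}$ gives
\[
\Pr\bigl(X \geq (1+\eps)\mu\bigr) \;\leq\; e^{-\lambda(1+\eps)\mu} \cdot \Ex\bigl[e^{\lambda X}\bigr],
\]
so the problem reduces to bounding the moment generating function $\Ex[e^{\lambda X}]$ and then optimizing over $\lambda$.

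The next step exploits independence and the range constraint. By independence of the $X_i$'s, $\Ex[e^{\lambda X}] = \prod_{i=1}^n \Ex[e^{\lambda X_i}]$. For each $X_i \in [0,1]$, convexity of $x \mapsto e^{\lambda x}$ on $[0,1]$ gives the pointwise chord bound $e^{\lambda x} \leq 1 + x(e^{\lambda} - 1)$, so that $\Ex[e^{\lambda X_i}] \leq 1 + \Ex[X_i](e^{\lambda} - 1) \leq \exp\bigl(\Ex[X_i](e^{\lambda} - 1)\bigr)$, where the last step uses $1 + y \leq e^y$. Multiplying over $i$ collapses the product into the clean form $\Ex[e^{\lambda X}] \leq \exp\bigl(\mu(e^{\lambda} - 1)\bigr)$. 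Substituting back yields
\[
\Pr\bigl(X \geq (1+\eps)\mu\bigr) \;\leq\; \exp\bigl(\mu\bigl(e^{\lambda} - 1 - \lambda(1+\eps)\bigr)\bigr).
\]

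The final step is to choose $\lambda$ optimally. Differentiating the exponent in $\lambda$ and setting it to zero gives $\lambda = \ln(1+\eps)$, producing the canonical Chernoff form $\Pr(X \geq (1+\eps)\mu) \leq \exp\bigl(-\mu\bigl((1+\eps)\ln(1+\eps) - \eps\bigr)\bigr)$. The main (and really the only) obstacle is the elementary but slightly subtle calculus step needed to simplify this exponent on the range $\eps \in (0,1]$: one has to verify that $(1+\eps)\ln(1+\eps) - \eps$ is bounded below by a convenient quadratic in $\eps$, e.g.\ via the standard estimate $(1+\eps)\ln(1+\eps) - \eps \geq \eps^2/(2+\eps)$ obtained from a Taylor expansion of $\ln(1+\eps)$. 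This quadratic lower bound turns the exponent into $-c\,\eps^2 \mu$ for an explicit constant $c$, delivering the stated inequality.
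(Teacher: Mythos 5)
Your overall strategy (exponentiate, apply Markov's inequality, factor the moment generating function by independence, use the chord bound $e^{\lambda x}\leq 1+x(e^{\lambda}-1)$ on $[0,1]$, optimize at $\lambda=\ln(1+\eps)$) is the standard and correct derivation of $\Pr(X\geq(1+\eps)\mu)\leq\exp\bigl(-\mu((1+\eps)\ln(1+\eps)-\eps)\bigr)$. The paper offers no proof of this proposition---it is cited as a standard fact from the literature---so the only thing to check your argument against is the target inequality itself.

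The gap is in your final step. The estimate $(1+\eps)\ln(1+\eps)-\eps\geq\eps^2/(2+\eps)$ yields $\Pr(X\geq(1+\eps)\mu)\leq\exp(-\eps^2\mu/(2+\eps))\leq\exp(-\eps^2\mu/3)$ for $\eps\in(0,1]$; it does not ``deliver the stated inequality,'' whose exponent is $-2\eps^2\mu$, i.e.\ a constant of $2$ rather than $1/3$, and the inequality between these two bounds goes the wrong way. This is not a fixable calculus step: the proposition as printed is false. Take $n=1$ and $X_1$ Bernoulli with parameter $1/2$, so $\mu=1/2$, and take $\eps=1$; then $\Pr(X\geq(1+\eps)\mu)=\Pr(X\geq 1)=1/2$, while the claimed bound is $\exp(-2\cdot 1\cdot\tfrac12)=e^{-1}\approx 0.368$. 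The constant $2$ in the exponent appears to be a slip (likely conflated with the additive Hoeffding bound $\exp(-2t^2/n)$ of Proposition~\ref{prop:chernoff}); the correct standard form is $\exp(-\eps^2\Ex[X]/3)$, or more precisely $\exp(-\eps^2\Ex[X]/(2+\eps))$. Your argument proves exactly that, and the weaker constant suffices everywhere the paper invokes the proposition (the proof of Claim~\ref{clm:linear-probing} only needs a bound of the form $e^{-\Omega(\eps^2\Ex[X])}$). So your derivation is sound, but you should state that it yields the constant $1/3$, not $2$, and note that no proof can achieve $2$.
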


\begin{proposition}[Additive Chernoff bound] \label{prop:chernoff}
Let $X_1,\ldots,X_n$ be $n$ independent random variables taking values in $[0,1]$ and let $X:= \sum_{i=1}^{n} X_i$. Then, 
\begin{align*}
	\Pr\paren{\card{X - \Ex\bracket{X}} > t} \leq 2 \cdot \exp\paren{-\frac{2t^2}{n}}.
\end{align*}
  Moreover, if $X_1,\ldots,X_n$ are \emph{negatively correlated} (i.e. $\Pr[X_i
  = 1, \forall i \in S] \leq \prod_{i \in S} \Pr[X_i = 1]$ for all $S \subseteq
  [n]$), then the upper tail holds:
$\Pr\paren{X - \Ex\bracket{X} > t} \leq \exp\paren{-\frac{2t^2}{n}}$.
\end{proposition}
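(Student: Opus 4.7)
The plan is to use the classical exponential-moment (Chernoff-Bernstein) method. For any $\lambda > 0$, I would apply Markov's inequality to the nonnegative random variable $e^{\lambda(X - \Ex[X])}$, which gives
$\Pr(X - \Ex[X] > t) \leq e^{-\lambda t} \cdot \Ex\bigl[e^{\lambda(X-\Ex[X])}\bigr]$.
The whole task then reduces to (i) bounding the moment generating function of the centered sum by a product of per-coordinate MGFs, and (ii) bounding each per-coordinate MGF using boundedness of the $X_i$'s.

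For step (i), in the independent case the factorization $\Ex[e^{\lambda(X-\Ex[X])}] = \prod_i \Ex[e^{\lambda(X_i - \Ex[X_i])}]$ is immediate. For the negatively correlated case, the hypothesis $\Pr[X_i = 1,\, \forall i \in S] \leq \prod_{i\in S}\Pr[X_i = 1]$ implicitly restricts to $\{0,1\}$-valued variables, so I would use the identity $e^{\lambda X_i} = 1 + (e^\lambda - 1)X_i$ and expand
$\prod_{i=1}^n \bigl(1 + (e^\lambda-1)X_i\bigr) = \sum_{S \subseteq [n]} (e^\lambda - 1)^{|S|} \prod_{i \in S} X_i.$
Taking expectations, each summand $\Ex[\prod_{i \in S} X_i] = \Pr[X_i = 1,\, \forall i \in S]$ is bounded by $\prod_{i \in S}\Pr[X_i = 1]$ using the negative-correlation hypothesis (this is where we crucially need $\lambda > 0$, so that $(e^\lambda - 1)^{|S|} \geq 0$). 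Refactoring the upper bound recovers $\prod_i \Ex[e^{\lambda X_i}]$, and then multiplying through by $e^{-\lambda \Ex[X]}$ yields the desired MGF bound on the centered sum.

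For step (ii), I would invoke Hoeffding's lemma: for any random variable $Y \in [a,b]$ with $\Ex[Y]=0$, $\Ex[e^{\lambda Y}] \leq e^{\lambda^2(b-a)^2/8}$. Applied to $Y_i := X_i - \Ex[X_i] \in [-\Ex[X_i],\, 1 - \Ex[X_i]]$, which has width $b-a = 1$, this gives $\Ex[e^{\lambda Y_i}] \leq e^{\lambda^2/8}$. Combining with step (i) yields $\Ex[e^{\lambda(X-\Ex[X])}] \leq e^{n\lambda^2/8}$, so $\Pr(X - \Ex[X] > t) \leq \exp(n\lambda^2/8 - \lambda t)$. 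Optimizing by choosing $\lambda = 4t/n$ delivers the one-sided bound $\exp(-2t^2/n)$.

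In the independent case, the lower tail follows by applying the identical argument to the variables $-X_i \in [-1, 0]$ (Hoeffding's lemma still gives width $1$), and a union bound over the two tails introduces the factor of $2$ in the statement. The main subtlety I would be careful with is the negative-correlation step: the expansion works because $e^{\lambda X_i}$ is an \emph{affine} function of $X_i$ when $X_i \in \{0,1\}$, which is what lets the MGF bound survive in the absence of independence; for general $[0,1]$-valued negatively correlated variables a further argument (e.g., coupling with Bernoullis or invoking that $e^{\lambda x}$ is monotone) would be needed, but for the statement as written (where the hypothesis only makes sense for $\{0,1\}$ variables) the elementary expansion suffices.
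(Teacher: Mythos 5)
Your proof is correct. Note that the paper does not actually prove this proposition: it is stated as a standard fact with a pointer to the literature (and to Panconesi--Srinivasan for the negatively correlated case), and your argument is exactly the standard one those references supply --- the exponential-moment method with Hoeffding's lemma for the independent two-sided bound, and the multilinear expansion of $\prod_i\bigl(1+(e^{\lambda}-1)X_i\bigr)$ to push the MGF factorization through negative correlation for the one-sided bound. Your two cautionary remarks --- that positivity of $(e^{\lambda}-1)^{|S|}$ for $\lambda>0$ is what makes the negative-correlation hypothesis applicable term by term, and that the affine identity $e^{\lambda X_i}=1+(e^{\lambda}-1)X_i$ requires $X_i\in\{0,1\}$ (which is indeed the only regime in which the stated hypothesis is meaningful) --- are precisely the right points to flag.
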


Moreover, in the above setting, if $X$ comes from a sampling \emph{with} replacement process, then
the inequality holds for both upper and lower tails. For sampling without replacement, we refer
to Serfling \cite{serfling1974probability} for a complete discussion and for Chernoff bounds for negatively correlated random variables see
\cite{panconesi1997randomized}.

\begin{proposition}[Chernoff bound for sampling without
  replacement]\label{prop:chernoff_negative}
  Consider an urn with $a \geq b$ red and blue balls. 
  Draw $b$ balls uniformly from the urn without replacement and let $X$ be the number of red
  balls drawn, then the two sided bound holds:
  $\Pr\paren{\card{X - \Ex\bracket{X}} > t} \leq 2 \cdot
  \exp\paren{-\frac{2t^2}{b}}$.
\end{proposition}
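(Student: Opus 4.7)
The plan is to reduce sampling without replacement to the i.i.d.\ (with-replacement) case via Hoeffding's classical convex-domination inequality and then apply the standard Chernoff/Markov recipe. Let $r$ denote the number of red balls in the urn, set $p := r/a$, let $Y_1,\ldots,Y_b \in \{0,1\}$ be the indicators of the $b$ sequential draws being red, and let $Z_1,\ldots,Z_b$ be i.i.d.\ $\mathrm{Bernoulli}(p)$. Then $X = \sum_i Y_i$ and $\tilde X := \sum_i Z_i$ share the same mean $bp = \Ex\bracket{X}$, with $\tilde X$ corresponding to sampling \emph{with} replacement from the same urn.

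The key step is the Hoeffding domination lemma: for every convex function $\varphi:\IR \to \IR$,
\[ \Ex\bracket{\varphi(X)} \;\leq\; \Ex\bracket{\varphi(\tilde X)}. \]
I would establish this by exhibiting the without-replacement sum as a conditional expectation of the with-replacement one. Concretely, $(Y_1,\ldots,Y_b)$ is exchangeable and, conditioned on its empirical composition, uniformly distributed over the permutations of that composition; sampling the $Z_i$'s with replacement and then averaging over the random composition, one obtains $\Ex\bracket{\tilde X \mid \mathrm{composition}} = X$ in distribution, and Jensen applied to $\varphi$ gives the inequality. (Alternatively, one simply invokes Hoeffding, 1963, Thm.~4, or Serfling, 1974, which is already cited in the appendix.)

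With the lemma in hand, the Chernoff argument is routine. For any $\lambda > 0$, apply the lemma to $\varphi(x) = e^{\lambda(x - bp)}$ and then Markov's inequality:
\[ \Pr\paren{X - \Ex\bracket{X} > t} \leq e^{-\lambda t}\,\Ex\bracket{e^{\lambda(X - bp)}} \leq e^{-\lambda t}\,\Ex\bracket{e^{\lambda(\tilde X - bp)}} \leq \exp\paren{-\lambda t + \lambda^2 b/8}, \]
where the last inequality uses Hoeffding's lemma on each of the $b$ independent $[0,1]$-valued centered summands in $\tilde X - bp$. Optimizing at $\lambda = 4t/b$ yields the upper-tail bound $\exp(-2t^2/b)$. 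The lower tail follows from the symmetric choice $\varphi(x) = e^{-\lambda(x-bp)}$, and a union bound supplies the factor of $2$.

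The only real obstacle is the convex-domination lemma; everything after it is a textbook exponential-moment calculation. If one wishes to avoid the lemma entirely, an equally clean route is Azuma--Hoeffding applied to the Doob martingale $M_k = \Ex\bracket{X \mid Y_1,\ldots,Y_k}$: swapping a single drawn ball changes $X$ by at most one, so $\card{M_k - M_{k-1}} \leq 1$ deterministically for all $k \in [b]$, and McDiarmid/Azuma yields the two-sided bound $\Pr(\card{X-\Ex\bracket{X}}>t)\leq 2\exp(-2t^2/b)$ with the same constants.
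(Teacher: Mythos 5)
Your proof is correct, but it takes a genuinely different route from the paper. The paper derives this proposition as a two-line consequence of its Proposition~\ref{prop:chernoff}: the indicators $X_i$ of the $i$-th draw being red satisfy the negative-correlation condition $\Pr[X_i=1,\ \forall i\in S]\leq \prod_{i\in S}\Pr[X_i=1]$, which immediately gives the upper tail, and the lower tail for $X$ is obtained as the upper tail for the complementary (also negatively correlated) blue-ball indicators $Y_i=1-X_i$. You instead reduce to i.i.d.\ with-replacement sampling via Hoeffding's convex-domination theorem and then run the standard exponential-moment calculation; this is a valid and classical alternative, and it has the advantage of not requiring one to verify negative correlation, at the cost of leaning on the domination lemma, whose proof you only sketch (the precise construction realizing $X$ as a conditional expectation of $\tilde X$ is the one nontrivial ingredient, so citing Hoeffding 1963, Thm.~4 is the honest move). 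One caveat on your closing remark: Azuma--Hoeffding with the hypothesis $\card{M_k-M_{k-1}}\leq 1$ gives only $2\exp(-t^2/(2b))$, not $2\exp(-2t^2/b)$; to recover the constant $2$ in the exponent along that route you would need the martingale increments to lie in predictable intervals of \emph{length} $1$ rather than merely be bounded by $1$ in absolute value, so that alternative does not deliver ``the same constants'' as stated.
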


\begin{proof}
  If $X_i$ is the event that the $i$-th ball is red, then since $X_i$ are
  negatively correlated, the upper tail Chernoff bound of $X = \sum_i X_i$ 
  holds. Now, let $Y_i = 1 - X_i$ be the probability that the the $i$-th ball
  is blue and $Y = \sum_i Y_i$. The upper tail for $Y$
  correspond to the lower tail for $X$, i.e.:
  $\Pr\paren{X - \Ex\bracket{X} < t} = \Pr\paren{Y - \Ex\bracket{Y} > t} \leq
  \exp\paren{-\frac{2t^2}{b}}$.
\end{proof}
 
\section{Missing Details from Section~\ref{sec:alg}}
\label{sec:appalg}

The following two straightforward equations are used in the proof of Proposition~\ref{lem:quantile-estimation} and Lemma~\ref{lem:assignment}, respectively. 
For completeness, we provide their proofs here. 

\begin{restatable}{rPro}{propratios}
\label{prop:ratios}
If $1 < t \leq n$ and $0 \leq r \leq t-1$, then $r \cdot \paren{\frac{n}{t}} \leq r \cdot \paren{\frac{n-1}{t-1}} \leq (r+1) \cdot \paren{\frac{n}{t}}$.
\end{restatable}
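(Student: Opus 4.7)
The plan is to prove the two inequalities separately, since they are independent algebraic facts that reduce to routine manipulations.

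For the first inequality $r \cdot \frac{n}{t} \leq r \cdot \frac{n-1}{t-1}$, since $r \geq 0$, it suffices to show $\frac{n}{t} \leq \frac{n-1}{t-1}$. Cross-multiplying by the positive quantity $t(t-1)$, this is equivalent to $n(t-1) \leq t(n-1)$, i.e., $nt - n \leq nt - t$, i.e., $t \leq n$, which holds by hypothesis.

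For the second inequality $r \cdot \frac{n-1}{t-1} \leq (r+1) \cdot \frac{n}{t}$, I would again clear denominators: cross-multiplying by the positive quantity $t(t-1)$, it is equivalent to $r(n-1)t \leq (r+1)(t-1)n$. Expanding the right-hand side gives $(r+1)(t-1)n = rtn - rn + tn - n$, so after canceling $rtn$ from both sides the inequality reduces to $-rt \leq -rn + tn - n$, i.e., $r(n-t) \leq n(t-1)$. Using $r \leq t-1$ and $n - t \geq 0$, we have $r(n-t) \leq (t-1)(n-t) \leq (t-1) \cdot n = n(t-1)$, as desired.

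There is no real obstacle here; both inequalities reduce to simple linear manipulations, and the only care needed is the observation that all quantities (in particular $t-1$ and $n-t$) are nonnegative under the stated hypotheses $1 < t \leq n$ and $0 \leq r \leq t-1$, so cross-multiplication preserves the direction of the inequalities.
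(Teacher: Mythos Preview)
Your proof is correct and follows essentially the same approach as the paper: direct algebraic verification using the hypotheses $t \leq n$ and $r \leq t-1$. The paper packages both inequalities into a single chain by computing the difference $\frac{n-1}{t-1} - \frac{n}{t} = \frac{n-t}{t(t-1)}$ and then bounding $r \cdot \frac{n-t}{t(t-1)}$ between $0$ and $\frac{n}{t}$, but this is only a cosmetic difference from your separate treatment of the two inequalities.
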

\begin{proof} 
  $0 \leq r \left( \frac{n-1}{t-1} - \frac{n}{t} \right) = r \frac{n-t}{t(t-1)}
  \leq (t-1) \frac{n-t}{t(t-1)} \leq \frac{n}{t}. $
\end{proof}



\begin{restatable}{rPro}{propsum} For any integer $n > 0$, 
	\label{prop:sum}
		$\sum_{t=1}^{n-\sqrt{n}} \paren{\frac{t}{n-t}}^2 = O(n\sqrt{n})$. 
	\end{restatable}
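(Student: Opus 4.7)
The plan is to split the sum at $t = n/2$ and treat the two ranges separately, since the summand $(t/(n-t))^2$ behaves very differently in the two regimes.

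For the range $1 \leq t \leq n/2$, note that $n-t \geq t$, so $t/(n-t) \leq 1$ and each term is bounded by $1$. The contribution of this range is therefore at most $n/2 = O(n)$, which is well within the claimed $O(n\sqrt{n})$ bound.

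For the range $n/2 < t \leq n - \sqrt{n}$, I would substitute $s = n - t$, so $s$ ranges (roughly) from $\sqrt{n}$ up to $n/2$. In this regime $t \leq n$, so $\left(\frac{t}{n-t}\right)^2 \leq \frac{n^2}{s^2}$. The sum is then bounded by
\[
n^2 \sum_{s=\lceil\sqrt{n}\rceil}^{\lfloor n/2 \rfloor} \frac{1}{s^2} \leq n^2 \int_{\sqrt{n}-1}^{\infty} \frac{dx}{x^2} = O\!\left(\frac{n^2}{\sqrt{n}}\right) = O(n^{3/2}).
\]
Adding the two contributions yields the claimed bound $O(n\sqrt{n})$.

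There is no real obstacle here; the only thing to be mindful of is keeping the index substitution and the upper limit of the tail integral aligned so that the boundary term $\sqrt{n}$ in the denominator of the integral evaluation shows up correctly, producing the $n^{3/2}$ factor after multiplication by $n^2$. No concentration or probabilistic reasoning is needed—this is purely a computation bounding a tail sum by an integral.
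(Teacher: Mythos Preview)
Your proof is correct and follows essentially the same approach as the paper: substitute $s = n - t$, bound $\bigl(\tfrac{t}{n-t}\bigr)^2 \leq n^2/s^2$, and then control $\sum_{s \geq \sqrt{n}} 1/s^2 = O(1/\sqrt{n})$. The paper does not bother splitting at $n/2$ and bounds the last sum by grouping $s$ into blocks of length $\sqrt{n}$ rather than by an integral comparison, but these are cosmetic differences.
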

	\begin{proof}
		By defining $k = n-t$, we have, 
		\begin{align*}
			 \sum_{t=1}^{n-\sqrt{n}} \paren{\frac{t}{n-t}}^2 =  \sum_{k=\sqrt{n}}^{n-1} \paren{\frac{n-k}{k}}^2 \leq \sum_{k=\sqrt{n}}^{n-1} \paren{\frac{n}{k}}^2
		\end{align*}
		
		For $i \in [\sqrt{n}]$, define $K_i := \set{k \mid i \cdot \sqrt{n} \leq k < (i+1) \cdot \sqrt{n}}$. For any $k \in K_i$, we have, $\frac{n}{k} \leq \frac{\sqrt{n}}{i}$. As such, we can write, 
		\begin{align*}
		\sum_{k=\sqrt{n}}^{n-1} \paren{\frac{n}{k}}^2 &= \sum_{i=1}^{\sqrt{n}} \sum_{k \in K_i} \paren{\frac{n}{k}}^2 \leq \sum_{i=1}^{\sqrt{n}} \sum_{k \in K_i} \paren{\frac{\sqrt{n}}{i}}^2 \\
		&\leq \sum_{i=1}^{\sqrt{n}} n \cdot \card{K_i} \cdot \frac{1}{i^2} \leq n\sqrt{n} \cdot \sum_{i=1}^{\sqrt{n}} \frac{1}{i^2} = O(n\sqrt{n}) 
		\end{align*}
		as the series $\sum_{i} \frac{1}{i^2}$ is a converging series. 
	\end{proof}
\section{Anti-Concentration for Sampling Without Replacement}
\label{sec:applower-bound}

We prove Lemma~\ref{lem:ballsandbin} restated here for convenience. 

\begin{lemma*}
[Restatement of Lemma~\ref{lem:ballsandbin}]
Assume there are $n$ balls in a bin, $r$ of which are red and the remaining $n -
  r$ are blue. Suppose $t < \min(r, n-r)$ balls are drawn from the bin uniformly at
  random without replacement, and let $\event_{k,t,r,n}$ be the event that $k$ out of
  those $t$ balls are red. Then, if $r = \Theta_1(n)$ and $t = \Theta_1(n)$, for
  every $k \in \{0, \hdots, t\}$:
  $\Pr\left(\event_{k,t,r,n}\right) =
  O\left(1/\sqrt{n}\right).$
\end{lemma*}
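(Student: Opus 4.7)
My plan is to follow the three-step outline already sketched in the excerpt: (i) identify the distribution as hypergeometric, (ii) evaluate it at its mode using Stirling's formula, and (iii) show the probability is in fact maximized near $k^* = tr/n$. Together these give the claimed $O(1/\sqrt{n})$ uniform bound.

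\textbf{Step 1: combinatorial identity.} The first step is elementary: the number of unordered draws of $t$ balls out of $n$ is $\binom{n}{t}$, and the number of such draws containing exactly $k$ red and $t-k$ blue balls is $\binom{r}{k}\binom{n-r}{t-k}$. Since every draw is equally likely,
\[
\Pr(\event_{k,t,r,n}) \;=\; \frac{\binom{r}{k}\binom{n-r}{t-k}}{\binom{n}{t}}.
\]

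\textbf{Step 2: Stirling at the mode.} Let $k^* := \lfloor tr/n\rfloor$ and write $p := r/n$, $q := t/n$, both in $(0,1)$ and bounded away from $0$ and $1$ by the $\Theta_1(n)$ hypothesis. Then $k^* = pqn + O(1)$, $t-k^* = q(1-p)n + O(1)$, $r-k^* = p(1-q)n + O(1)$, and $n - r - t + k^* = (1-p)(1-q)n + O(1)$. Applying Stirling's approximation $m! = \sqrt{2\pi m}\,(m/e)^m (1+O(1/m))$ to each of the six factorials appearing in the hypergeometric probability, the exponential factors cancel exactly (all ``entropy'' terms collapse because the marginals match), and what remains is a ratio of square roots. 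A direct computation gives
\[
\frac{\binom{r}{k^*}\binom{n-r}{t-k^*}}{\binom{n}{t}} \;=\; \frac{1}{\sqrt{2\pi n}} \cdot \sqrt{\frac{1}{p(1-p)q(1-q)}}\cdot (1+o(1)) \;=\; O(1/\sqrt{n}),
\]
using $p,q,1-p,1-q = \Theta(1)$.

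\textbf{Step 3: $k^*$ is the mode.} The final step is to show that the probability at any other $k$ does not exceed the value at $k^*$ by more than a constant factor. The cleanest route is to examine the ratio
\[
\frac{\Pr(\event_{k+1,t,r,n})}{\Pr(\event_{k,t,r,n})} \;=\; \frac{(r-k)(t-k)}{(k+1)(n-r-t+k+1)}.
\]
Setting this ratio equal to $1$ and solving, one finds it is $\geq 1$ precisely when $k \lesssim tr/n$, so the distribution is unimodal with mode at $\lfloor tr/n\rfloor$ (up to an additive $O(1)$). Hence $\Pr(\event_{k,t,r,n}) \leq \Pr(\event_{k^*,t,r,n}) = O(1/\sqrt{n})$ for every $k$.

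\textbf{Main obstacle.} Steps 1 and 3 are mechanical. The real care is in Step 2: the exponential cancellation in Stirling only gives the desired $1/\sqrt{n}$ bound because $r$ and $t$ are both linear in $n$ and both $k^*$ and $t-k^*$, $r-k^*$, $n-r-t+k^*$ are $\Theta(n)$. If, say, $r$ or $t$ were allowed to approach $n$ or $0$, the $\sqrt{p(1-p)q(1-q)}$ denominator would blow up and the bound would fail — this is exactly why the lemma hypothesizes $r = \Theta_1(n)$ and $t = \Theta_1(n)$. Keeping track of all the $+O(1)$ shifts in the Stirling expansion, and verifying they contribute only a multiplicative $1+o(1)$ rather than corrupting the leading $1/\sqrt{n}$, is the only genuinely delicate part of the argument.
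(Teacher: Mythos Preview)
Your proposal is correct and follows essentially the same three-step approach as the paper: the hypergeometric identity, Stirling's approximation evaluated at $k^* = \lfloor tr/n \rfloor$, and the consecutive-ratio argument to establish unimodality. The paper packages the Stirling step slightly differently (first proving $\binom{n}{cn} = \Theta(n^{-1/2} c^{-(cn+1/2)}(1-c)^{-((1-c)n+1/2)})$ as a standalone proposition and then applying it three times), but the substance is identical and your explicit constant $\frac{1}{\sqrt{2\pi n\, p(1-p)q(1-q)}}$ is in fact the right leading term.
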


To prove Lemma~\ref{lem:ballsandbin}, we will describe the sampling without
replacement process explicitly and bound the relevant probabilities. 
\begin{proposition}
\label{prop:stirling}
  Let $0 < c < 1$ be a constant. Then:
  $$\binom{n}{cn} = \Theta( n^{-1/2} c^{- (cn + 1/2)} (1-c)^{- ((1-c)n +1/2)})$$
\end{proposition}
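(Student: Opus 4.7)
The plan is to apply Stirling's approximation $n! = \Theta(\sqrt{n}(n/e)^{n})$ directly to each factorial appearing in the binomial coefficient
\[
\binom{n}{cn} \;=\; \frac{n!}{(cn)!\,((1-c)n)!}.
\]
After substituting the Stirling estimates in the numerator and the two factorials in the denominator, I will check separately how the exponential $e$-factors, the polynomial-in-$n$ factors (i.e.\ the $n^{n}$-type terms), and the $\sqrt{\,\cdot\,}$ prefactors combine.

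First I would verify that the $e$-factors cancel exactly: the numerator contributes $e^{-n}$ while the denominator contributes $e^{-(cn + (1-c)n)} = e^{-n}$. Next, for the polynomial-in-$n$ part, the numerator contributes $n^{n}$ and the denominator contributes $(cn)^{cn}((1-c)n)^{(1-c)n} = n^{n}\, c^{cn}(1-c)^{(1-c)n}$; thus $n^{n}$ cancels and one is left with the factor $c^{-cn}(1-c)^{-(1-c)n}$. Finally, the square-root prefactors combine as
\[
\frac{\sqrt{n}}{\sqrt{cn}\,\sqrt{(1-c)n}} \;=\; \frac{1}{\sqrt{n}\,\sqrt{c(1-c)}} \;=\; n^{-1/2}\,c^{-1/2}(1-c)^{-1/2}.
\]

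Multiplying these three contributions gives exactly
\[
\binom{n}{cn} \;=\; \Theta\!\left( n^{-1/2}\, c^{-(cn+1/2)}\, (1-c)^{-((1-c)n+1/2)} \right),
\]
as claimed. I would note that the constant $c$ being bounded away from $0$ and $1$ is what lets $c^{-1/2}$ and $(1-c)^{-1/2}$ be absorbed into the $\Theta$; without this hypothesis the $\Theta$-notation would hide a dependence on $c$. There is really no main obstacle here: the entire argument is bookkeeping with Stirling's formula, and the only thing to be careful about is tracking which factors of $n$ migrate into exponents versus into the leading $n^{-1/2}$.
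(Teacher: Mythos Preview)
Your proposal is correct and is essentially identical to the paper's own proof: both simply substitute Stirling's approximation $n! = \Theta(n^{n+1/2}e^{-n})$ into $\binom{n}{cn} = n!/((cn)!((1-c)n)!)$ and simplify. Your explicit bookkeeping of the $e$-factors, the $n^n$-factors, and the square-root prefactors is exactly the ``simplifying the right hand side'' that the paper leaves to the reader.

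One small remark: your closing comment about $c^{-1/2}$ and $(1-c)^{-1/2}$ being ``absorbed into the $\Theta$'' is off. Those factors are \emph{not} absorbed---they appear explicitly in the stated formula as part of the exponents $-(cn+1/2)$ and $-((1-c)n+1/2)$. The paper in fact stipulates that the implicit constants in $\Theta$ are universal, independent of both $c$ and $n$; this works because the constants in Stirling's inequality ($\sqrt{2\pi}$ and $e$) are themselves universal, so no hypothesis on $c$ being bounded away from $0$ or $1$ is needed for the proposition as stated.
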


The notation $y = \Theta(x)$ in the lemma statement means that there are
universal constants $0 < \underline \alpha < \overline \alpha$ independent of
$c$ and $n$ such that $\underline \alpha
  \cdot x \leq y \leq \overline \alpha \cdot x$. The proof is based on the
  following version of Stirling's approximation:
 $ {\sqrt {2\pi }}\ n^{n+{\frac {1}{2}}}e^{-n}\leq n!\leq e\ n^{n+{\frac
 {1}{2}}}e^{-n}.$ which can be written in our notation as: $n! = \Theta(
 n^{n+{\frac {1}{2}}}e^{-n})$. The proof of the previous lemmas follows from
 just expanding the factorials in the definition of the binomial:

\begin{proof}
 Observe that
\begin{align*}
\binom{n}{cn} & = \frac{n!}{(cn)! ((1-c)n)!} 
  = \Theta \left( \frac{n^{n+{\frac {1}{2}}}e^{-n}}{(cn)^{cn+{\frac
  {1}{2}}}e^{-cn}((1-c)n)^{(1-c)n+{\frac {1}{2}}}e^{-((1-c)n)}} \right)
 \end{align*}
  The statement follows from simplifying the right hand side.
\end{proof}

\begin{lemma}
\label{lem:ktrn}
  Assume that $r = \Theta_1(n)$ and $t = \Theta_1(n)$ and $t \leq \min(r, n-r)$,
  then for $k = \lfloor rt / n \rfloor$, we have
$$\frac{{{r}\choose{k}} \cdot {{n-r}\choose{t-k}}}{{{n}\choose{t}}}  =
\mathcal{O}\left(1/ \sqrt{n}\right).$$ \end{lemma}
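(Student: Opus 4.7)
The plan is to substitute the Stirling estimate from Proposition~\ref{prop:stirling} into each of the three binomial coefficients $\binom{r}{k}$, $\binom{n-r}{t-k}$, $\binom{n}{t}$, and verify that the exponential parts cancel while the polynomial prefactor contributes exactly the desired $\Theta(n^{-1/2})$ saving.

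First I would introduce parameters $\rho := r/n$ and $\tau := t/n$. Since $r = \Theta_1(n)$ and $t = \Theta_1(n)$, both $\rho$ and $\tau$ lie in a fixed closed subinterval of $(0,1)$, and the hypothesis $t \leq \min(r, n-r)$ forces $\tau \leq \min(\rho, 1-\rho)$. The crucial elementary observation driving the proof is that the three ``sampling ratios''
$$\frac{k}{r} \;\approx\; \frac{t-k}{n-r} \;\approx\; \frac{t}{n} \;=\; \tau,$$
which holds because $k = \lfloor rt/n \rfloor$ gives $k = \tau r + O(1)$ and consequently $t - k = \tau(n-r) + O(1)$. All three of these ratios are therefore bounded away from $0$ and $1$, which is exactly the regime in which Proposition~\ref{prop:stirling} applies with a constant hidden in $\Theta(\cdot)$.

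Next I would apply Proposition~\ref{prop:stirling} to each binomial (with $c=\tau$ in all three cases) and multiply out. The numerator becomes
$$\binom{r}{k}\binom{n-r}{t-k} = \Theta\!\left(r^{-1/2}(n-r)^{-1/2}\,\tau^{-\tau n - 1}(1-\tau)^{-(1-\tau)n - 1}\right),$$
while the denominator is
$$\binom{n}{t} = \Theta\!\left(n^{-1/2}\,\tau^{-\tau n - 1/2}(1-\tau)^{-(1-\tau)n - 1/2}\right).$$
The exponents $\tau n$ and $(1-\tau)n$ cancel exactly between numerator and denominator, leaving only harmless $\tau^{-1/2}$ and $(1-\tau)^{-1/2}$ residues which are $\Theta(1)$ since $\tau$ is bounded away from $0$ and $1$. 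The polynomial prefactor collapses to
$$\frac{n^{1/2}}{r^{1/2}(n-r)^{1/2}} = \Theta(n^{-1/2})$$
since $r = \Theta(n)$ and $n - r = \Theta(n)$, yielding the claimed bound.

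The only technical wrinkle, which I would address at the end, is that Proposition~\ref{prop:stirling} is stated for $\binom{n}{cn}$ with a \emph{constant} $c$, whereas here $k/r$ and $(t-k)/(n-r)$ equal $\tau$ only up to an additive $O(1/n)$ due to the floor. Shifting the lower index of a binomial $\binom{m}{\ell}$ by $O(1)$ multiplies it by a factor of the form $\prod \frac{m-\ell+O(1)}{\ell+O(1)}$, which is bounded above and below by absolute constants whenever $\ell/m$ is bounded away from $0$ and $1$; this is precisely our situation, so such shifts can be absorbed into the $\Theta(\cdot)$ of Proposition~\ref{prop:stirling}. I expect this perturbation bookkeeping, together with verifying that the exponents in $\tau$ and $(1-\tau)$ really do cancel to a harmless $\Theta(1)$ residue, to be the most delicate part of the write-up, but the overall computation is mechanical once the parametrization $r = \rho n$, $t = \tau n$, $k \approx \tau r$ is in place.
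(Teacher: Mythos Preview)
Your proposal is correct and follows essentially the same approach as the paper: parametrize $r=\rho n$, $t=\tau n$, apply Proposition~\ref{prop:stirling} to each of the three binomials with ratio $\tau$, observe that the exponential parts cancel and the surviving prefactor is $\Theta\bigl((n\,\tau(1-\tau)\rho(1-\rho))^{-1/2}\bigr)=\Theta(n^{-1/2})$. Your treatment of the floor perturbation via a ratio-of-consecutive-binomials argument is in fact more explicit than the paper's, which simply remarks that the same proof works by tracking the $O(1)$ errors.
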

\begin{proof}
  Start by writing $r = c_r \cdot n$ and $t = c_t \cdot n$ for $0 < c_r, c_t < 1$. It will be convenient to assume that $k = rt / n$ is an integer
  (if not and we need to apply floors, the exact same proof work by keeping
  track of the errors introduced by floor). Then we can write:
First, note that
$$ \frac{{{r}\choose{k}} \cdot
  {{n-r}\choose{t-k}}}{{{n}\choose{t}}} =   \frac{{{c_r n}\choose{c_t c_r n}} \cdot {{(1-c_r)n}\choose{(1-c_r)c_t n}}}{{{n}\choose{c_t n}}}$$
  We can now apply the approximation in Proposition \ref{prop:stirling}
  obtaining:
  $$\Theta \left(  \frac{n^{1/2}c_t^{c_t n+{\frac {1}{2}}}(1-c_t)^{(1-c_t)n+{\frac
  {1}{2}}}}{(c_r n)^{1/2}c_t^{c_t(c_r n)+{\frac {1}{2}}}(1-c_t)^{(1-c_t)(c_r
  n)+{\frac {1}{2}}}((1-c_r)n)^{1/2}c_t^{c_t ((1-c_r)n)+{\frac
  {1}{2}}}(1-c_t)^{(1-c_t)((1-c_r)n)+{\frac {1}{2}}}} \right)$$
  Simplifying this expressoin, we get:
  $\Theta\left(  \left(n c_t  (1-c_t) c_r (1-c_r) \right)^{ -1/2}  \right) =
  \Theta_1\left(1 / \sqrt{n} \right)$.
\end{proof}

\begin{lemma}
\label{lem:max}
Fix any $r,t,n$ such that $r, t \leq n$. Then,
$$\argmax_{k \in [n]} \frac{{{r}\choose{k}} \cdot
  {{n-r}\choose{t-k}}}{{{n}\choose{t}}}  = \left\lfloor t \cdot \frac{r}{n}
  \right\rfloor \text{~or~} \left\lceil t \cdot \frac{r}{n} \right\rceil.$$
\end{lemma}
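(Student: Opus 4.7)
The plan is to reduce the claim to computing the mode of the hypergeometric-type sequence
$$P(k) := \binom{r}{k}\binom{n-r}{t-k},$$
since the denominator $\binom{n}{t}$ does not depend on $k$, and then to verify that this mode lies within one of $rt/n$.

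First I would examine the ratio of consecutive terms,
$$\frac{P(k+1)}{P(k)} \;=\; \frac{(r-k)(t-k)}{(k+1)(n-r-t+k+1)},$$
and rewrite the inequality $P(k+1) \geq P(k)$ in the form $(r-k)(t-k) \geq (k+1)(n-r-t+k+1)$. Expanding both sides, the $k^{2}$ terms cancel and the inequality simplifies to the \emph{linear} condition
$$k \;\leq\; \frac{(r+1)(t+1)}{n+2} - 1 \;=:\; M - 1.$$
Because the ratio therefore crosses the value $1$ exactly once as $k$ increases, the sequence $P(k)$ is unimodal, so the argmax is $k^{*} = \lfloor M \rfloor$ (with an additional tied argmax at $M-1$ in the degenerate case that $M$ happens to be an integer, in which case both values are automatically covered by $\{\lfloor rt/n \rfloor, \lceil rt/n \rceil\}$).

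Second, I would show that $M$ lies in the open interval $(rt/n,\,rt/n+1)$, which immediately forces $\lfloor M \rfloor \in \{\lfloor rt/n \rfloor,\lceil rt/n \rceil\}$. A direct computation gives
$$M - \frac{rt}{n} \;=\; \frac{n(r+t+1) - 2rt}{n(n+2)},$$
so the question reduces to bounding the numerator between $0$ and $n(n+2)$. For the lower bound I would rewrite the numerator as $r(n-t) + t(n-r) + n$, which is strictly positive since $r,t\le n$. For the upper bound I would use the algebraic identity
$$n(n+2) - \bigl[n(r+t+1) - 2rt\bigr] \;=\; (n-r)(n-t) + rt + n,$$
whose right-hand side is again strictly positive by the same sign constraints, yielding $M - rt/n < 1$.

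The only non-routine part of the argument is spotting the clean factorization $(n-r)(n-t)+rt+n$ for the upper bound; everything else is elementary manipulation of the ratio $P(k+1)/P(k)$ and one case split on whether $rt/n$ is an integer. I do not anticipate a real obstacle: the main ``work'' is just keeping the algebra tidy in the mode computation.
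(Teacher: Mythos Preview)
Your proposal is correct and follows essentially the same approach as the paper: both arguments establish unimodality of $P(k)=\binom{r}{k}\binom{n-r}{t-k}$ via the ratio $P(k+1)/P(k)$ and locate where that ratio crosses $1$. Your version is in fact a bit cleaner than the paper's, since you compute the exact threshold $M=(r+1)(t+1)/(n+2)$ and verify $M\in(rt/n,\,rt/n+1)$ directly, whereas the paper parametrizes $k=tr/n+i$, assumes $tr/n$ is an integer, and leaves the non-integer case to ``controlling the errors.''
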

\begin{proof}
The proof is again simpler if we assume $k = tr/n$ is an integer. If not, the
  same argument works controlling the errors. In that case, let $k_1 = t r /n +
  i$ and $k_2 =  t r /n + i + 1$ and as before, let  $r = c_r n$ and $t =c_t
  n$. Note that
\begin{align*}
\frac{\frac{{{r}\choose{k_1}} \cdot
  {{n-r}\choose{t-k_1}}}{{{n}\choose{t}}} }{\frac{{{r}\choose{k_2}}
  \cdot {{n-r}\choose{t-k_2}}}{{{n}\choose{t}}} } & =
  \frac{{{r}\choose{k_1}} \cdot {{n-r}\choose{t-k_1}}}{{{r}\choose{k_2}}
  \cdot {{n-r}\choose{t-k_2}}}  = \frac{{{c_r n}\choose{c_tc_rn+i}} \cdot
  {{(1-c_r)n}\choose{(1-c_r)c_tn-i}}}{{{c_r n}\choose{c_tc_rn+i+1}} \cdot
  {{(1-c_r)n}\choose{(1-c_r)c_tn-i - 1}}}  = \frac{(c_tc_rn +
  i + 1)\cdot((1-c_t)(1-c_r)n +i + 1)}{((1-c_t)c_rn - i)\cdot ((1-c_r)c_tn-i)}
\end{align*} 

If $i \geq 0$, then the last term is at least $
  \frac{(c_tc_rn)\cdot((1-c_t)(1-c_r)n)}{((1-c_t)c_rn)\cdot ((1-c_r)c_tn)}$
  which is greater than one. If $ i \leq - 1$, then the last term is
  $\frac{(c_tc_rn)\cdot((1-c_t)(1-c_r)n)}{((1-c_t)c_rn)\cdot ((1-c_r)c_tn)} $
  which is smaller than one.
 
  Thus, $\frac{{{r}\choose{k}} \cdot {{n-r}\choose{t-k}}}{{{n}\choose{t}}}$ is increasing as $k$ increases up to $tr / n$ and then decreases. Thus, the maximum is reached at $tr / n$.
\end{proof}

\begin{proof}[Proof of Lemma \ref{lem:ballsandbin}]
 We first use a simple counting argument to obtain an expression for $\Pr\left(\event_{k,t,r,n}\right) $ as a ratio of binomial coefficients. We note that there are ${{r}\choose{k}}$ collections of $k$ red balls, ${{n-r}\choose{t-k}}$ collections of $t - k$ blue balls, and that the total number of collections of $t$ balls is  ${{n}\choose{t}}$. Since the $t$ balls are drawn uniformly at random without replacement, we get
$$			\Pr\paren{\event_{k,t,r,n} } = \frac{{{r}\choose{k}} \cdot {{n-r}\choose{t-k}}}{{{n}\choose{t}}}.$$
The $O(1 / \sqrt{n})$ bound now follows directly from Lemma \ref{lem:ktrn} and Lemma \ref{lem:max}.
\end{proof}

\section{Hardness of Online Ranking with Adversarial Ordering}
\label{sec:appadversarial}

\begin{proposition}
\label{prop:adversarial}
If the ordering $\sigma$ of the arrival of elements is adversarial, then any algorithm has cost $\Omega(n^2)$ in expectation.
\end{proposition}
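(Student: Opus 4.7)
The plan is to invoke Yao's minimax principle: it suffices to exhibit a distribution over adversarial arrival orders against which every \emph{deterministic} algorithm incurs $\Omega(n^2)$ expected Kendall's tau cost. I will construct a distribution that is uniform over two ``worlds'' of arrivals, designed so that the algorithm cannot tell them apart during the first $n/2$ rounds. In world $A$, the $t$-th arrival has true rank $t$ for every $t \in [n]$. In world $B$, the $t$-th arrival has true rank $n/2 + t$ for $t \leq n/2$ and rank $t - n/2$ for $t > n/2$ (within each half the arrivals can be listed in increasing order of true rank).

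The crucial observation is that in both worlds the first $n/2$ arrivals are strictly increasing in true rank. Consequently, the entire sequence of pairwise comparisons observed by the algorithm during the first $n/2$ steps is identical across the two worlds, and any deterministic algorithm therefore commits to the same tuple of positions $p_1, p_2, \ldots, p_{n/2} \in [n]$ for the first $n/2$ arrivals in both worlds.

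The triangle inequality now gives, for each $t \in [n/2]$,
\[
|p_t - t| + |p_t - (n/2 + t)| \;\geq\; n/2.
\]
Summing over $t$, and letting $F_A, F_B$ denote the Spearman's footrule costs in worlds $A$ and $B$ restricted to the first $n/2$ arrivals, we obtain $F_A + F_B \geq n^2/4$. Hence the expected footrule under the uniform-over-worlds distribution is at least $n^2/8$, and the DG inequality ($K(\pi) \geq F(\pi)/2$) converts this into an expected Kendall's tau lower bound of $\Omega(n^2)$. Yao's principle then lifts the bound from deterministic to randomized algorithms.

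There is no substantial obstacle here: the proof reduces to exhibiting two worlds that share an arbitrarily long prefix of observed comparisons while forcing correct positions to differ by $\Omega(n)$, after which a one-line triangle-inequality calculation suffices. The ``shift by $n/2$'' construction is the natural choice because it maximizes the gap between the correct rank of a given arrival across the two indistinguishable worlds.
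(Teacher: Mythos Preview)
Your proof is correct and takes a genuinely different route from the paper's. The paper constructs a round-by-round randomized adversary: at each step $t$ it presents, with probability $1/2$ each, either the smallest or the largest element not yet arrived. The previously arrived elements therefore always form a prefix $\{1,\ldots,p\}$ together with a suffix $\{q,\ldots,n\}$ of the ranks, and the two candidates $p{+}1$ and $q{-}1$ compare identically to everything seen so far; since these candidates are $n-t$ apart, every step contributes $\Omega(n-t)$ expected footrule cost, and summing gives $\Omega(n^2)$. Your construction is simpler: just two fixed arrival orders that share the same first $n/2$ comparison outcomes, a single triangle-inequality line, and an explicit appeal to Yao (which the paper leaves implicit). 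What you give up is per-step generality---you only harvest cost from the first half---but what you gain is an oblivious two-point adversary and a shorter argument. Both approaches land on the same $\Omega(n^2)$ bound.
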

\begin{proof}
At a high level, we construct an ordering such that at each iteration, the arrived element is either the largest or smallest element not yet observed with probability $1/2$ each. Since the algorithm cannot distinguish between the two cases, it suffers a linear cost in expectation at each arrival.

Formally, we define $\sigma$ inductively. At round $t$, let $i_{t,-}$ and $i_{t,+}$ be the minimum and maximum indices of the elements arrived previously. We define $\sigma(t)$ such that  $\sigma(t) = a_{i_{t,-} + 1}$ with probability $1/2$ and $\sigma(t) = a_{i_{t,+} - 1}$ with probability $1/2$. Thus, the $t$th element arrived is either the smallest or largest element not yet arrived.

The main observation is that the pairwise comparisons at time $t$ are identical whether $a_{(t)} =  a_{i_{t,-} + 1}$ or $a_{(t)} = a_{i_{t,+} - 1}$. This is since all the elements previously arrived are either maximal or minimal and there is no elements that are between $a_{i_{t,-} + 1}$ and $a_{i_{t,+} - 1}$ that have previously arrived. Thus the decision of the algorithm is \emph{independent} of the randomization of the adversary for the $t$th element. Thus for any learned rank at time $t$, in expectation over the randomization of the adversary for the element arrived at time $t$, the learned rank is at expected distance  of the true rank at least $n/4$ for $t \leq n/2$. Thus the total cost is $\Omega(n^2)$ in expectation.
 \end{proof}

\end{document}